\newtheorem{theorem}{Theorem}
\newtheorem{definition}[theorem]{Definition}
\newtheorem{example}[theorem]{Example}
\newtheorem{lemma}[theorem]{Lemma}
\newtheorem{proposition}[theorem]{Proposition}
\newtheorem{remark}[theorem]{Remark}
\newcommand{\field}[1]{\mathbb{#1}}
\newcommand{\R}{\field{R}}
\newcommand{\E}{\field{E}}
\newcommand{\1}{\field{1}}
\newcommand{\p}{\field{P}}
\newcommand{\mat}[2][rrrrrrrrrrrrrrrrrrrrrrrrrrrrrrrrrrrrrrrrrrrrrrrrrrr]{\left[ \begin{array}{#1} #2 \\ \end{array}\right]}
\newcommand{\met}[2][ccccccccccccccccccccccccccccccccc]{\left[ \begin{array}{#1} #2 \\ \end{array}\right]}
\newcommand{\ms}{\bar \sigma}
\newcommand{\eps}{\epsilon}
\DeclareMathOperator{\sign}{sign}
\numberwithin{equation}{section}
\title{Averaging + Learning Models and Their Asymptotics}
\author{
    Ionel Popescu \thanks{I.P. was partially supported by UEFISCDI PN-III-P4-ID-PCE-2016-0372}\\
  University of Bucharest \\ Institute of Mathematics "Simion Stoilow" of the Romanian Academy\\
  \url{ionel.popescu@fmi.unibuc.ro} \normalsize
\\ 
  Tushar Vaidya \\
  SUTD \\
  \url{tushar.vaidya@ntu.edu.sg} \normalsize
}
\begin{document}
\maketitle

\begin{abstract}
We develop original models to study interacting agents in financial markets and in social networks. Within these models randomness is vital as a form of shock or news that decays with time. Agents learn from their observations and learning ability to interpret news or private information in time-varying networks. Under general assumptions on the noise, a limit theorem is developed for the generalized Averaging framework for certain type of conditions governing the learning.  In this context, the agents' beliefs (properly scaled) converge in distribution that is not necessarily normal.  Fresh insights are gained not only from proposing a new setting for social learning models but also from using different techniques to study discrete-time random linear dynamical systems.
\end{abstract}

\tableofcontents
\section{Introduction}
How do agents reach consensus on prices in realistic noisy environments? This is the central theme of this paper. Traders interact with one another and \emph{learn} from their own environment. While learning and interaction can be modelled satisfactorily, the introduction of noise complicates the choice of a framework and model. Furthermore, the choice of discrete or continuous time expands this issue. Our aim is to propose new models of interaction and learning where noise is a significant feature in discrete time. These new models of learning and interaction entail agents who observe actions of other traders and update their own beliefs. Repeated interaction can in certain cases lead to consensus on a particular value of a tradeable commodity. Interaction models should take into account the environment of trading. When agents interact in financial markets, they are visually, algorithmically observing each other other's quotes.

Prices are updated fast and in a simultaneous manner but the environment is always noisy. Averaging models offer a cogent and natural way to analyse interaction when agents learn and observe each others' past actions through an online platform. Visual screens provide a snapshot of where current prices are.  For such models there is a rich interplay between probability, dynamical systems and game theoretic ideas \cite{mossel2017opinion}.  
In financial markets, trading is never sequential (one player at a time) nor is information perfectly perceived or received by agents. Transactions occurs at breakneck speed \cite{buchanan2015physics}. Agents (algorithms) move simultaneously: cancellations are the norm in today's fast markets. In practical terms, sequential learning models don't seem appropriate \cite{vaidya2020learning, SequentialNaiveLearningArieli}. Interaction is important in the emergence of consensus. Choices by agents from the previous round of play are available to all agents in the current round of play. The question is then what sort of averaging or heuristic process is ideal?

DeGroot type of learning models convey an essential and robust idea that is taking a firmer foothold in theory \cite{amir2021robust,SequentialNaiveLearningArieli,molavi2018theory}. They offer a functional form of updating. Myopic updating occurs in each round. Something akin to persuasion bias could explain our basic model \cite{demarzo2003persuasion}. As in an echo chamber, agents in our setup have fixed weights but update their responses until consensus is reached. One could think of it as a behavioural heuristic and why repeated averaging is effective. Alternatively, with the right cost function representing the distance of an agent's opinion against other opinions, the best response is repeated averaging. Recently there have been some experimental papers on evidence of DeGroot updating \cite{chandrasekhar2020testing, becker2017network}. Repeated averaging models are our base precisely because they capture the nature of interaction and learning in financial markets so succinctly. On top of the base models we develop more sophisticated extensions, relaxing the fixed nature of the weights and learning matrices.

Our discussion so far is on trading but any network where the players have access to some sort of learning feedback is suitable.  A game theoretic framework where every player takes into account other players' payoff is unrealistic and points to serious difficulties on how to even represent utilities \cite{kirman2002reflections}; these are economic arguments that are better addressed by in depth philosophical interludes.  Moreover, traders rarely have access to private information on how previous decisions led to a certain payoff for their opponent at least not in a high frequency sense. If a trading firm is a publicly listed company, then one can infer its trading losses or gains from public records.  Nevertheless specific profit and loss accounts of trading individual stocks is a private matter. Firms never break down their income statements down to specific asset classes or instruments. Results are amalgamated and reported quarterly: not per hour, minute or second. Dynamical learning is an active area of research in computer science as well. Articles \cite{papadimitriou2018game,piliourasaamas, mai2017opinion} propose and analyze the dynamics rather than just the concept of Nash equilibrium. Similar questions and issues to this paper were raised in \cite{kirman2002reflections} at an informal level. 


\subsection{Basic models - Averaging + Learning}	
There are many models of noisy learning \cite{acemoglu2011opinion, blondel2005convergence, lebaron2001builder, masuda2017random, mueller2013general, Sobel2000EconomistsMO,  sobkowicz2020whither}. Depending on the question, different paradigms have been put forth. Our objective is learning and so we aim to use aspects of both social learning theory and dynamical systems. Difficulties in Bayesian environments mean the DeGroot model has become a workhorse for social learning \cite{banerjee2019naive, Sobel2000EconomistsMO}. It offers a way forward for tractable models that can relax simple assumptions. Research using this framework is still active. In our setting, a group of traders  observe quotes of others and incorporate an average of the previous quotes. The stark departure from standard DeGroot learning comes from the fact that not only are the agents averaging but they are getting imperfect feedback from an external/internal source on the true consensus value. To our knowledge, the setting of these types of consensus models to trading is new. We use the framework of 
\cite{vaidya2018learning, vaidya2020learning} as the base case for our models,  generalizing these models in the form

\begin{equation}\label{i:0}
x_t^i=\underbrace{f_{i,t}(X_{t-1})}_{\color{red} averaging} +\underbrace{g_{i,t}(x_{t-1}^i)}_{\color{blue} learning},
\end{equation}	
where $X_t$ is the vector of previous beliefs of $n$ agents at time $t$ and $f_{i,t},g_{i,t}$ determine the dynamic for the $i$th agent. We treat the case where $f$ is linear and $g$ could be allowed to be non-linear.  To keep things simple  we will discuss in the introduction the full linear case given by
\begin{equation}\label{eqn:dt_opinion_dyn}
	x^i_{t+1}=\sum_{j=1}^n (a_t)_{ij}x^j_{t}+\epsilon_{i,t}(\sigma-x^i_{t}), 
\end{equation}
which in the matrix form reads
\begin{equation}
X_{t+1}=A_tX_t+ \mathcal{E}_t(\sigma \mathbbm{1}_n - X_t),
\end{equation}
where $X_t = (x^1_t,...,x^n_t)^T$ is the opinion of each agent in discrete-time  $t$, and $\mathcal{E} = \textrm{diag}(\epsilon_1,...,\epsilon_n)$ is the learning rate of each agent when they are provided with a known feedback $\sigma$. The stochastic matrix $A_t$ encapsulates the weights agents put on each other at time $t$ and is partially similar to the DeGroot model.

Agents' aptitude to determine the quality of feedback is their ability $\epsilon_i$. A concrete example is in the options market. A trader observes quotes for volatility of an option in the market. At the same time, the option (with a particular strike and maturity) will have a theoretical known value of $\sigma$. The trader's own ability to extract this theoretical model is conveyed by his or her institution's quantitative model. Options, like other financial assets, have a replicable, theoretical value implied by a model. This theoretical value means that the feedback is interpreted as skill by the trader's institution. 

We should point out from the beginning that this model is \emph{not} intended to be a DeGroot model.  Our focus here is to study more conditions under which consensus arises.  Thus, in this first round, the value of $\sigma$ is known to all players and the goal is to give conditions under which the interaction leads to convergence to the real value of $\sigma$.  A \emph{key} difference between our model and existing literature is that recursions of the form $X_t=A_t X_{t-1}+ B_t$ in our setup have time dependent matrices. For systems with independent and identically distributed $A_t,\, B_t$ there has been prior work \cite{de2004multidimensional, diaconis1999iterated, buraczewski2010convergence, buraczewski2016stochastic, kesten1973random, parasnis2022random}.  In traditional game-theoretic models the focus is on equilibrium \cite{fudenberg1998theory}. The focus here is how agents \textbf{behave} in the long run, either they converge to a point or (properly scaled) to a distribution.

\subsection{General Noisy Models} The base model is deterministic.  The next natural and more realistic extension is to assume that the value of $\sigma$ is noisy.  In this context, the model has a more dynamical interpretation, the agents do not observe the true value of $\sigma$, however our main focus is to study conditions under which the agents converge in some form. 

We study this under various forms.  The first type is the one in which the noise decreases with time to $0$.  The other case is where the noise is persistent.  We give some general conditions under which the convergence takes place. Under these general conditions on the interaction matrix and the learning rates, if the noise is decaying to $0$ in some mathematical form, then the agents converge to $\sigma$.  If the noise is permanent though, then, under some supplementary conditions, the convergence is in a distributional sense. 

It is worth to mention, how we quantify this. The key point is to define
\[
\rho_t=\max_{i=1,\cdots,n} (|(a_t)_{ii}-\eps_{i,t}|+1-(a_t)_{ii}).
\]
Notice that this is easy to compute and depends only on the diagonal part of the  matrix $A_t$ and the learning rates matrix $\mathcal{E}_t$.  Our condition on convergence is driven by the following ensemble of these individual quantities as 
\[
 \sup_{t\ge1}\{\rho_t+\rho_t\rho_{t-1}+\rho_t\rho_{t-1}\rho_{t-2}+\dots+\rho_{t}\rho_{t-1}\dots\rho_1\}<\infty.
\]
Notice that in the pure DeGroot model, $\rho_t=1$ for all $t$, the above condition is not satisfied which shows that the learning part plays an important role in the convergence.  

Under another condition on the fluctuation of the matrices $A_t$ and $\mathcal{E}_t$, we show that if the noise in the feedback is iid, then the distribution of $X_t$ converges.  However, the limiting distribution is not easy to describe.  We point out that in the case the learning part $g$ in the model \eqref{i:0} is allowed to be non-linear, we can give a version of the quantity $\rho_t$ under which most of the results can be extended.

\subsection{Limiting distributions} 
We prove that when the noise is iid, under some conditions, we have convergence in distribution of the agents' opinions.  However, the description of the limiting distribution is not explicit.  To understand a little more the limiting distribution, we take the linear case where the matrices $A$ and $\mathcal{E}$ are time independent.  Here we can prove that after rescaling the vector $X_t$, we can compute the asymptotic behavior.  This asymptotic behavior is fully controlled by the Jordan decomposition of the matrix $A-\mathcal{E}$.  

One of the possible behavior of the vector $X_t$ in the limit is governed by the normal distribution, very similar to the case of the Central Limit Theorem.  Although, this is one of the several possibilities; the one in which some of the eigenvalues of $A-\mathcal{E}$ are on the unit circle and all the rest are inside the unit disk.  In this case, the behavior is as one would expect, the noise appears in the limit through its covariance matrix and not its distribution.

What is really interesting to point out is that in addition to this behavior there are other behaviors.  Indeed, if all the eigenvalues of $A-\epsilon$ are in the unit disk, then the vector $X_t$ converges without any additional scaling.  However, the limit depends on the whole distribution of the noise.  

In the remaining case, if the matrix $A-\mathcal{E}$ has eigenvalues of absolute value strictly bigger than 1, then $X_t$ grows exponentially and in addition could also oscillate.  Overall, the whole distribution of the noise is involved in the limiting behavior of $X_t$.  

For the general limiting behavior of $X_t$, we can observe in \autoref{s:cltsim} and \autoref{f:alpha1andbigCLT1} distributions that could be fractal like depending on the noise. Important though is to mention that these limiting distributions could be used from the statistical standpoint to give estimates on the true parameter $\sigma$. 

It is also worth emphasizing that the limit distribution of the scaled dynamic is extremely sensitive to the structure of the Jordan decomposition of the matrix $A-\mathcal{E}$ and in particular the spectrum of the matrix $A-\mathcal{E}$.  Therefore, slight perturbation of the matrix could lead to radically different behavior of $X_t$ particularly in the case of eigenvalues of $A-\mathcal{E}$ in the proximity of the unit circle.


\subsection{Noisy Feedback and Related Literature}
The feedback can best explain the situation where a similar instrument is traded on another exchange or there is a common source of market chatter. 
Moreover, such chatter is commonly provided through voice box brokers 
or over-the-counter markets. We assume all agents have access to this 
feedback or chatter.  Take for example trading in Foreign Exchange (FX). Trading in USD/Euro can occur in both London and New York, when both markets are open. During this period, traders from both markets can see what each group is quoting, and during the crossover period before the New York traders become dominant, quotes from both centres will be tightly linked. Their ability via $\varepsilon$ to capture feedback on this $\bar{\sigma}$ through either external markets or internal models is represented by $\epsilon_i(\sigma-x^i_{t})$. In reality, for G7 currencies, most of the spot trading will be done by algorithms tracking each other's movements. At the same time these algorithms will have internal models guiding them to what the true theoretical value of the asset is, ensuring the price quoted is arbitrage free. Thus, feedback or learning is aptly represented and illustrated as a separate process to visible, external averaging. Another example would the S\&P500 European ETF (SPY) options, which are not cash settled as SPX options but stock settled.  
Quotes for the SPY options will also be linked with the SPX options. There will always be a model or arbitrage-free implied theoretical value. Equities are traded across multiple exchanges and platforms like currencies. Sometimes trades occur off-exchange and get reported at the end of the day through the Exchange's clearing system. This points to the fact that models with Averaging + Learning aspects are well suited for portraying trading dynamics as social phenomenon for a different types of financial assets. How agents interpret information or market chatter is their unique learning ability. This ability is impacted by noise. Learning is not \emph{perfect}. No additional assumption is required apart from the fact that noise is either persistent or transient. In either case, the system settles down, converging to consensus or to a distribution. Noise can be either common to all players or distinct. 


\subsubsection*{Related Literature}
Previous literature has not addressed this noisy feedback in a sustained, systematic effort to investigate averaging models, partly because of the difficulties of discrete-time dynamics and asymptotic analysis. There have been some contributions focusing on empirical results \cite{kozma2008consensus, monin2021information}. While noise was added to a pure DeGroot model in \cite{stern2021impact}, a Central Limit Theorem was not proven, rather noise added was assumed to be Gaussian; though this work did introduce \textit{global} and \textit{local} uniqueness noise models in its framework through simulations. Noise is also added in bounded confidence models \cite{su2020noise} such as Hegselmann-Krause (HK) models, which are quite distinct from DeGroot dynamics \cite{hegselmann2002opinion, su2017noise, wang2017noisy}. Interestingly, but complementary to our work is that of $\frac{1}{m}$-DeGroot dynamics \cite{amir2021robust}. A variant of DeGroot dynamics that is robust to stubborn agents and mis-specification is introduced. However, different from our framework, noise is present only at the beginning of the dynamics, after which the focus is on the interaction protocol with a large number of agents. This type of $\frac{1}{m}$-DeGroot dynamics approximates the standard DeGroot dynamics to the nearest fraction of the average of the players. Agents are restricted to choosing an opinion on the $\frac{1}{m}$ grid. It is a coarsening of the DeGroot dynamics. This model is in response to the vulnerability of stubborn agents leading consensus astray \cite{acemouglu2013opinion, golub2010naive, ghaderi2014opinion}. In our framework, certainly agents can be stubborn and not interact. Thus the dynamics in \ref{eqn:dt_opinion_dyn} can feature a row with $0$'s and one $1$ for the agent not interacting or sticking to her previous opinion. Still, in our models the learning aspect through $\mathcal{\varepsilon}$ ensures every agent reaches asymptotic consensus. Unlike some Bayesian models, noise in our framework can be common or distinct to each agent \cite{mossel2014asymptotic, hkazla2019reasoning}. Moreover, unlike Kinetic models of opinion exchange \cite{during2009boltzmann, during2015opinion, toscani2006kinetic}, our models are in discrete-time so continuous-time techniques are not applicable. Time-varying networks have been studied in continuous-time frameworks in different settings \cite{carletti2022theory, ghosh2022synchronized, mcquade2019social}. However, our concept of noise in space and time is quite different from earlier works. 

The DeGroot model arises as a special case in our framework, though our results do not apply to the pure DeGroot model. Noise in some Bayesian, social learning models is viewed as a private signal \cite{banerjee1992simple, banerjee2019naive, hkazla2019reasoning}.  However, the emphasis we put is on the probabilistic notions of consensus and realistic computations by agents as $t \to \infty$. This holds regardless of the number of agents.  Our setting is general enough to incorporate a range of models in the feedback. For time varying models with just $A_t$, without noise, see \cite{FB-LNS,lu2008synchronization, shang2022resilient, fagnani2018introduction, weber2019deterministic}. The recursions involving time-varying $A_t$ and $\mathcal{E}_t$, mean we are dealing with backward matrix products. However, the conditions we use for convergence are slightly weaker than even weak ergodicity \cite{bremaud2013markov,chatterjee1977towards}.

\subsection{Organization of the paper}
In \autoref{s:notations} we introduce the main definitions and notations.  Then, in  \autoref{s:BaseModel} we discuss the base model, an extension of the result from \cite{vaidya2018learning}. In \autoref{s:LearningwithRandomnoise} we introduce the models with noise. Theorem \ref{thm:blwn} is the main result of this part.   We treat the non-homogeneous and nonrandom model for the matrix $A_t$ and add external noise in $n$-dimensional Euclidean space.   

We provide in \autoref{s:Nonlinearlearning} a non-linear extension of the linear noisy model. Players still average from their observations of past actions but their own unique learning ability and how they interpret the extra information is a nonlinear function. This type of model fits with the earlier linear models, preserving the averaging nature of interaction. Suitable conditions on the nonlinear function are derived that exhibit consensus. Section \ref{s:CLT} presents the convergence in distribution results in the case of time independent noisy model with iid noise.   For more comments and relevance of the results see \autoref{thm:importance}.


\section{Notation and Assumptions}\label{s:notations}
In all subsequent analysis, $A$ refers to a row-stochastic weights matrix, whose rows sum to one. Depending on the setup, $A$ can be time varying or fixed.

We use the different norms, namely we take for a vector 
$
v=\begin{bmatrix}
	v_1 \\ \vdots \\ v_n
\end{bmatrix}, \, $
\[
|v|_{\infty}=\max_{i=1,\cdots,n}|v_i| \text{ and } |v|_1=\sum_{i=1}^n|v_i| 
\]
and the inner product of two vectors $v,w$ is given by 
\[
\langle v,w\rangle=v'w=\sum_{i=1}^n v_iw_i,  
\]
with the standard use of the transpose for $v'=[v_1,v_2,\dots,v_n]$.   

We have here a duality result computing one norm in terms of the inner product in the form 
\begin{equation}\label{e:na1}
		|v|_\infty=\sup_{|w|_1\le 1} \langle v,w\rangle \text{ and }|v|_1=\sup_{|w|_\infty\le 1}\langle v,w \rangle.  
	\end{equation}

For any $m\times n$ matrix $B$, we denote 
\[
|B|_\infty=\sup_{i=1,\dots, m}\sum_{j=1}^n|b_{ij}| \text{ and } |B|_1=\sup_{j=1,\dots, m}\sum_{i=1}^n|b_{ij}|.  
\]
We then have for any $m\times n$ matrix $B$ and any $n$ dimensional vector $v$
\[
|Bv|_\infty\le |B|_\infty|v|_\infty. 
\]
It is in fact easy to see that 
\[
|B|_\infty=\sup_{|v|_\infty\le 1} |Bv|_\infty.
\]

Finally, for a sequence of $n$-dimensional random variables $(X_t)_{t\ge0}$ 
 and another $n$-dimensional random variable $X$ we use the notation 
\begin{equation}\label{e:convdist}
X_t\Longrightarrow X \text{ if } \,\E[\phi(X_t)]\xrightarrow[t\to\infty]{}\E[X]
\end{equation}
for any continuous and bounded function $\phi:\R^n\to\R$.  


\section{Base Model}\label{s:BaseModel}
We further discuss our base models here to give the reader a perspective of the type of models analyzed. In the base model, we have $n$ agents and a fixed row-stochastic matrix $A$, which is the weights matrix.
Consider the dynamics for updating 

\begin{equation} \label{eq:BM1}
	X_{t+1}=AX_t+ \mathcal{E}(\ms  - X_t).
\end{equation}

We can impose a weaker condition on $\eps_i$ and use $\ms= \sigma 
\mathbbm{1}_n$  for notational convenience when the dimension is clear.  The vector $\mathbbm{1}_n$ denotes the vector with all components equal to $1$. 

\begin{proposition}\label{p:base}
	If $0<\epsilon_i<2 a_{ii}$, then all agents reach the same consensus value, in other words,
	\[\lim_{t \to \infty} X_t = \ms.\]
\end{proposition}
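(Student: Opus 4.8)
The plan is to recast the statement as a contraction estimate in the infinity norm. First I would introduce the error vector $Y_t := X_t - \ms$ (using the convention $\ms = \ms\mathbf{1}_n$ from the statement). Subtracting $\ms\mathbf{1}_n$ from both sides of \eqref{eq:BM1} and using the row-stochasticity $A\mathbf{1}_n = \mathbf{1}_n$, the constant terms cancel and one is left with a purely linear recursion for the error:
\[
Y_{t+1} = (A - \mathcal{E})\,Y_t,
\]
so that $Y_t = (A-\mathcal{E})^t Y_0$. Proving consensus therefore reduces to showing that the matrix $M := A - \mathcal{E}$ sends every initial error to zero, and for this it suffices to show $M$ is a strict contraction in the $|\cdot|_\infty$ norm.

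Next I would estimate $|M|_\infty$ row by row. The off-diagonal entries of $M$ agree with those of $A$ and are nonnegative, while the $i$-th diagonal entry is $a_{ii}-\epsilon_i$, whose sign is not fixed by the hypothesis. Hence the $i$-th absolute row sum is $\sum_{j\neq i} a_{ij} + |a_{ii}-\epsilon_i| = (1-a_{ii}) + |a_{ii}-\epsilon_i|$. A short case split finishes the estimate: if $0<\epsilon_i \le a_{ii}$ the row sum equals $1-\epsilon_i < 1$, while if $a_{ii} < \epsilon_i < 2a_{ii}$ the row sum equals $1 - 2a_{ii} + \epsilon_i < 1$. In both regimes the absolute row sum is strictly below $1$ precisely under the hypothesis $0<\epsilon_i<2a_{ii}$. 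Since there are finitely many rows, $|M|_\infty = \max_i (\text{absolute row sum}) < 1$.

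Finally, the submultiplicativity inequality recorded in the notation section gives
\[
|Y_t|_\infty = |M^t Y_0|_\infty \le |M|_\infty^{\,t}\, |Y_0|_\infty \longrightarrow 0,
\]
so $X_t \to \ms$, which is the claim.

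The main obstacle is the regime $a_{ii} < \epsilon_i < 2a_{ii}$, where the diagonal entry $a_{ii}-\epsilon_i$ turns negative and $M$ is no longer a nonnegative (sub-stochastic) matrix; one cannot then read contraction directly off the signed row sums $1-\epsilon_i$ as in the easy case $\epsilon_i \le a_{ii}$. It is exactly the absolute value built into the infinity norm that rescues the bound, and the factor $2$ in the hypothesis is sharp: it is the largest threshold keeping $(1-a_{ii})+(\epsilon_i-a_{ii})$ strictly below $1$. This is also the sense in which the condition is \emph{weaker} than requiring $\epsilon_i < a_{ii}$, which would only exploit the trivial nonnegative case.
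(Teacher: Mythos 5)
Your proof is correct and follows essentially the same route as the paper: reduce to the error recursion $Y_{t+1}=(A-\mathcal{E})Y_t$ and show the absolute row sums $(1-a_{ii})+|a_{ii}-\epsilon_i|$ are strictly below $1$ under $0<\epsilon_i<2a_{ii}$, giving a contraction in the $|\cdot|_\infty$ norm. The only cosmetic difference is that you verify the row-sum bound by an explicit case split on the sign of $a_{ii}-\epsilon_i$, whereas the paper rewrites the inequality directly as $|a_{ii}-\epsilon_i|<a_{ii}$; the substance is identical.
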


\begin{proof}
Recall the equation \eqref{eq:BM1} of the dynamics 
\[
	X_{t+1}=AX_t+ \mathcal{E}(\ms  - X_t).
\]

Equation \eqref{eq:BM1} can now be rewritten as
\[
X_{t+1}-\ms=(A - \mathcal{E}) (X_t -\ms).
\]
Setting $B=(A - \mathcal{E})$ and $Y_t=X_{t+1}-\ms$, the updating rule simplifies to
\[
(Y_t)_i=\sum_{j=1}^{n}b_{ij} (Y_{t-1})_j,
\] from which we can then obtain
\begin{align*}
	|(Y_t)_i|  &\leq \sum_{j=1}^{n}|b_{ij}| |(Y_{t-1})_j|\\
	&\leq |Y_{t-1}|_\infty \sum_{j=1}^{n}|b_{ij}|.
\end{align*}
Therefore,  $|Y_t|_\infty \leq |Y_{t-1}|_\infty  \max_{\substack{i=1,\cdots,n}}  \sum_{j=1}^{n}|b_{ij}|$.

On the other hand $b_{ij}=a_{ij}$ if $i \neq j$ so that	
\[
\sum_{j=1}^{n}|b_{ij}| =|a_{ii}-\eps_i| +\sum_{j\neq i}^{n}|a_{ij}|  =|a_{ii}-\eps_i| +1 - a_{ii},
\] where we have used the stochasticity of $A$, that is, the sum of the elements of each row is 1.  From this if we check that  $|a_{ii}-\eps_i| +1 - a_{ii} < 1$ which is the same as $|a_{ii}-\eps_i| <a_{ii}$ or equivalently $0<\eps_i<2a_{ii}$, then with 
\[\rho = \max_{\substack{i}} (|a_{ii}-\eps_i| +1 - a_{ii}),
\]
we definitely obtain $0\leq \rho <1$ and  $|Y_t|_\infty \leq \rho \, |Y_{t-1}|_\infty$. This is enough to conclude that 
\[
|Y_t|_\infty \leq \rho^t \, |Y_0|_\infty.
\] From which letting $t \to \infty$ shows that
\[
|Y_t|_\infty\xrightarrow[t\rightarrow\infty]{}0
\] and in particular also proves that  $
Y_t \underset{t \to \infty}{\longrightarrow} 0.
$
\end{proof}

We should point out that in the above result,  the convergence to $\ms$  is exponential and in fact, from the proof, we have that $|X_t-\ms|_\infty\le \rho^t|X_0-\ms|_\infty$.  Thus $\rho$ is a rate of convergence, but it might not be the optimal one.  The true rate of convergence might in fact be much smaller.  This is dictated in principle by the spectral radius which in general is much smaller than the proposed quantity $\rho$ above. This is covered by Gershgorin's theorem.   

For instance, if $A=\mat{1/2 &1/2\\1/2 & 1/2}$, and we take $\epsilon_1=0.01$, while $\epsilon_2=0.99$, then the eigenvalues of $A-\mathcal{E}$ are $\lambda_1=-0.700071, \lambda_2=0.700071$ while $\rho=.99$.  Furthermore, if we take $X_0$ to have equal components equal to $1/2$, then $(X_{100}-\ms)/\lambda_2^{100}=\mat{1/2\\1/3}$ showing that it converges to $0$ much faster.  The result here is a conservative one in the sense that the convergence is still exponential though we do not get the exact rate of convergence.  This analysis works well if the matrix $A$ is time independent, but as soon as we allow $A$ to change with time, the eigenvalue and eigenvector analysis no longer applies.  

For the case of constant matrix, one can have a much better understanding of the convergence rate by simply writing the matrix $A-\mathcal{E}$ in Jordan form as $A-\mathcal{E}=SJS^{-1}$, where $S$ is a matrix of eigenvalues and $D$ is a Jordan block matrix.  From this, one can solve for $X_t=\ms+SJ^t S^{-1}(X_0-\ms)$ and this gives a structure equation for $X_t$ with more details on the behavior of $X_t$ for large $t$.  The decay to $\ms$ is clearly controlled by the eigenvalue with the largest absolute value and its coefficient is given by the corresponding eigenvector.  In the case of eigenvalues with higher multiplicity, we have more contributions but still everything is in terms of the matrices $J$ and $S$.

The argument of Proposition \ref{p:base} allows an extension to the case when the matrices $A_t$ and $\mathcal{E}_t$ depend on t.  The bottom line here is that if we define 
	\[
	\rho_t = \max_{i} (|(a_t)_{ii}-\eps_i(t)| +1 - (a_t)_{ii})
	\] 
 and these quantities satisfy, 
	\begin{equation}\label{eq:BM2}
		\prod_{i=1}^{t} \rho_i \underset{t \to \infty}{\longrightarrow} 0,
	\end{equation}
 then we still have convergence to $\ms$.

For example, this is the case if all $\rho_t$ are bounded by $\rho <1$.  However, condition \eqref{eq:BM2} also allows cases where $\rho_t \underset{t \to \infty}{\longrightarrow} 1.$  We highlight two examples. For the first we have convergence.

\begin{example}
	Let's consider $\rho_t=\frac{t}{t+1}$, then $\prod_{i=1}^{t} \rho_i=\frac{1}{t+1}$ which converges to 0 as $t \to \infty$.
\end{example}  
However, condition \eqref{eq:BM2} also ensures we don't have the following situation.
\begin{example}
	Let's consider $\rho_t=\exp(-\frac{1}{t^2})$, then $\prod_{i=1}^{t} \rho_i=\exp(-\sum_{k=1}^{t}\frac{1}{k^2})$ which does not converge to zero.
\end{example}  
Condition \eqref{eq:BM2} can also be written as 
$
\sum_{i=1}^{t} \log \rho_i \xrightarrow[t\rightarrow\infty]{} -\infty,
$ or differently as
$
\sum_{i=1}^{t} (-\log \rho_i )\xrightarrow[t\rightarrow\infty]{} \infty$.  In fact, this is the case if $\frac{-\log \rho_t}{t^{-\alpha}}\geq C$ for some $C>0$ and $\alpha>0$. This translates into
$\rho_t \leq e^{-Ct^{\alpha}}$.

An astute reader would immediately notice that $\rho_t=| A|_{\infty}$ is one of the possible norms one can use.  In fact we can replace the matrix norm with any matricial norm generated by a vector norm in $\R^n$.  More precisely we can take $\rho_t=| A|_{\alpha}$ where $|A|_\alpha=\max_{| x|_\alpha=1}|Ax|_\alpha$ with $|\cdot|_\alpha$ chosen to be a vector norm on $\R^n$.  For details on matricial norms, one can take a look at \cite{horn2012matrix}.  
However the infinity norm has a clean and clear interpretation in terms of $a_{ii}$ and $\epsilon_i$ and that is the main reason we use it throughout the paper.

We can extend the conclusions if we replace the $\infty$-norm  of a vector by something of the form
\[
|\nu|_{\infty,\beta}=\max_{i=1,\cdots,n}|\nu_i|/\beta_i
\] where $\beta$ is a vector of positive values such that $A\beta \leq \delta \beta$. In this new norm we now have
\begin{align*}
	|(Y_t)_i| &\leq \sum_{j=1}^{n} |b_{ij}| |(Y_{t-1})_j|\\
	\dfrac{|(Y_t)_i|}{\beta_i}&\leq \sum_{j=1}^{n} \dfrac{|b_{ij}|\beta_j}{\beta_i} \dfrac{|(Y_{t-1})|_j}{\beta_j},
\end{align*} which yields \begin{align*}
	|(Y_t)|_{\infty,\beta} &\leq |(Y_{t-1})|_{\infty,\beta}\max_{i=1,\cdots,n}\sum_{j=1}^{n} \dfrac{|b_{ij}|\beta_j}{\beta_i} \\
	&= |(Y_{t-1})|_{\infty,\beta}\max_{i=1,\cdots,n}\left( |a_{ii}-\eps_i|  + \dfrac{1}{\beta_i} \sum_{j\neq i}^{n} a_{ij}\beta_j\right).
\end{align*} 

From the assumption $A\beta \leq \delta \beta$ we can get in the first place that $\sum_{j=1}^{n} a_{ij}\beta_j \leq \delta\beta_i $ or $\sum_{j\neq i}^{n} a_{ij}\beta_j \leq \beta_i (\delta -a_{ii})$ and thus $\frac{1}{\beta_i}\sum_{j\neq i}^{n} a_{ij}\beta_j \leq (\delta -a_{ii})$. This yields
\[
|Y_t|_{\infty,\beta} \leq |(Y_{t-1})|_{\infty,\beta}\max_{i=1,\cdots,n} \left( |a_{ii}-\eps_i|  + \delta -a_{ii} \right)
\] as long as $|a_{ii}-\eps_i| +\delta -a_{ii} < 1$, which is satisfied by $-(1-\delta)<\eps_i<1-\delta+2a_{ii}$. The question is if there exists such a vector with $A\beta \leq \delta \beta $ (this means component wise). Such a choice is $\beta=[1,1,\dots,1]'$  and $\delta=1$ since $A$ is a stochastic matrix. If such a $\beta$ exists with $\delta <1$ then we get a relaxation of the main condition.

Interestingly, if $A$ is not necessarily stochastic but has positive entries, then by a theorem of Perron-Frobenius there exists a real eigenvalue that is greater than the absolute value of all the other eigenvalues and its eigenvector has positive entries. The argument above shows that we can definitely choose $\delta$ and $\beta$ to have the same result.

The above arguments allow us to posit this result.
\begin{theorem}
	Assume $X_t=A_tX_{t-1} + \mathcal{E}_t(\ms -X_{t-1})$ with $A_t$ row-stochastic matrix and let $\rho_t=\max_{i=1,\cdots,n}(|(a_t)_{ii}-(\eps_t)_i|+1 -(a_t)_{ii})$.  
	If  \, $\prod_{s=1}^{t}\rho_s  \underset{t \to \infty}{\longrightarrow} 0$, then $X_t \underset{t \to \infty}{\longrightarrow} \ms$.
\end{theorem}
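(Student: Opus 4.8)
The plan is to mirror the single-step contraction argument from the base Proposition, now tracking how the contraction factor changes at each time step. First I would introduce the error vector $Y_t = X_t - \ms$. Since each $A_t$ is row-stochastic we have $A_t\ms = \ms$, and substituting the recursion $X_t = A_t X_{t-1} + \mathcal{E}_t(\ms - X_{t-1})$ yields
\[
Y_t = (A_t - \mathcal{E}_t)\,Y_{t-1} =: B_t Y_{t-1},
\]
exactly as in the fixed-matrix case but with a time-dependent $B_t$.

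Next I would bound one step in the infinity norm. Submultiplicativity gives $|B_t Y_{t-1}|_\infty \le |B_t|_\infty\,|Y_{t-1}|_\infty$, so everything reduces to computing $|B_t|_\infty$. Because $\mathcal{E}_t$ is diagonal, the off-diagonal entries of $B_t$ coincide with those of $A_t$, so for each row $i$
\[
\sum_{j=1}^n |(b_t)_{ij}| = |(a_t)_{ii} - (\eps_t)_i| + \sum_{j\neq i}(a_t)_{ij} = |(a_t)_{ii} - (\eps_t)_i| + 1 - (a_t)_{ii},
\]
using nonnegativity of the off-diagonal weights and $\sum_j (a_t)_{ij} = 1$. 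Taking the maximum over $i$ gives precisely $|B_t|_\infty = \rho_t$, hence $|Y_t|_\infty \le \rho_t\,|Y_{t-1}|_\infty$. Iterating this telescopes to $|Y_t|_\infty \le \bigl(\prod_{s=1}^{t}\rho_s\bigr)|Y_0|_\infty$, and the hypothesis $\prod_{s=1}^{t}\rho_s \to 0$ forces $|Y_t|_\infty \to 0$, i.e. $X_t \to \ms$.

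The main subtlety, and the only place one must resist a false shortcut, is that the individual factors $\rho_t$ need not be strictly less than $1$: the relaxed hypothesis controls only the product, not each term. Thus one cannot invoke a uniform per-step contraction. Fortunately the single-step estimate $|Y_t|_\infty \le \rho_t\,|Y_{t-1}|_\infty$ holds for any value of $\rho_t$, since it is merely operator-norm submultiplicativity and does not require $\rho_t < 1$; so the telescoped product bound stays valid and the convergence follows directly once $\prod_{s=1}^{t}\rho_s \to 0$. This is exactly the phenomenon already illustrated by the examples following the base Proposition, where $\rho_t \to 1$ and yet the product may still vanish.
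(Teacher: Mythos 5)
Your proof is correct and follows essentially the same route as the paper: reduce to $Y_t=(A_t-\mathcal{E}_t)Y_{t-1}$, identify $\rho_t$ as the induced $\infty$-norm of $A_t-\mathcal{E}_t$ via the row-sum computation, and telescope to $|Y_t|_\infty\le\bigl(\prod_{s=1}^{t}\rho_s\bigr)|Y_0|_\infty$. Your closing observation that no per-step contraction $\rho_t<1$ is needed, only the vanishing of the product, is exactly the point the paper makes in its remarks and examples following the base Proposition.
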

In the case $A_t$ are all equal to A, then if $0<\eps_i<2a_{ii}, \: i=1,\cdots,n,$  then $X_t \underset{t \to \infty}{\longrightarrow} \ms$.

\section{Learning with  random noise}\label{s:LearningwithRandomnoise}
Our base model with learning is expanded to have random noise in the feedback term. We introduce a random vector $\gamma_t$ which we quantify later. The hypothesis is that $\gamma_t$ is small.  For this section we also consider the case of time depending evolution. 

The model is given by 
\[
X_t=A_tX_{t-1} + \mathcal{E}_t(\ms +\gamma_t-X_{t-1})
\] where $X_t$ is the vector of prices at time $t$ and $\ms$ is the vector of equilibrium price or consensus value the agents are trying to learn. In order to prove that $X_t - \ms$ converges to 0, we rewrite the equation as 

\begin{align*}
	X_t - \ms &=A_t X_{t-1} -\ms + \mathcal{E}_t(\ms -X_{t-1}) +\mathcal{E}_t \gamma_t\\
	&=A X_{t-1} - A_t\ms + \mathcal{E}_t(\ms -X_{t-1}) +\mathcal{E}_t \gamma_t \mbox{ as $A\ms=\ms$ }\\
	&=(A_t  -\mathcal{E}_t)(X_{t-1} -\ms) + \mathcal{E}_t \gamma_t .\\
\end{align*}
Therefore if we denote by $Y_t=X_{t-1} -\ms$, then we can simplify the above expression as
\[
Y_t=(A_t  -\mathcal{E}_t)Y_{t-1} + \mathcal{E}_t \gamma_t .
\]
With the same argument as before we obtain
\[
|Y_t|_\infty \leq \rho_t |Y_{t-1}|_\infty + C|\gamma_t|
\]
with 
\begin{equation}\label{e:l2}
	\rho_t=\max_{i=1,\cdots,n} (|(a_t)_{ii}-(\eps_t)_i|+1-(a_t)_{ii}).
\end{equation}
We formulate a general result as follows.   
\subsection{Noisy Learning}
In the theorem below, we examine the appropriate noise in convergence terms.  With vanishing noise, the system still exhibits the consensus property. Proving this convergence with vanishing noise in probability requires a separate lemma. Noise has two parts in this theorem. In the first three parts, noise is vanishing and consensus is reached. In markets, this noise is seen as a form of shock that decays over time. The Theorem quantifies this precisely, using distinct modes of convergence.
In the fourth item, when the noise is persistent agents do not reach consensus and also do not converge to the same asymptotic distributions. So while $X_t$ converges asymptotically, the individual agents may converge to different marginal distributions.

\begin{theorem}\label{thm:blwn}
	Assume the model $X_t=A_tX_{t-1} + \mathcal{E}_t(\ms +\gamma_t-X_{t-1})$ with $A_t$ a row-stochastic matrix. With the notation from \eqref{e:l2} assume that 
	\begin{equation}\label{e:ll1} 
 \sup_{t\ge1}\{\rho_t+\rho_t\rho_{t-1}+\rho_t\rho_{t-1}\rho_{t-2}+\dots+\rho_{t}\rho_{t-1}\dots\rho_1\}<\infty.
	\end{equation}
	
	\begin{enumerate}
		\item If $\gamma_t \xrightarrow[t \to \infty]{a.s} 0$, then $X_t \xrightarrow[t \to \infty]{a.s} \ms$.
		\item  If $\gamma_t \xrightarrow[t \to \infty]{\p} 0$, then $X_t \xrightarrow[t \to \infty]{\p} \ms$.
		\item  If $\gamma_t \xrightarrow[t \to \infty]{L^p} 0$, then $X_t \xrightarrow[t \to \infty]{L^p} \ms$.
		\item If we assume 
		\begin{equation}\label{e:llc14}
			X_t=A_tX_{t-1} + \mathcal{E}_t(\gamma_t-X_{t-1})
		\end{equation}
		where now $(\gamma_t)_{t\ge1}$ are iid and integrable and in addition to \eqref{e:ll1} we assume that 
		\begin{equation}\label{e:ll1b}
			\sum_{t\ge1}(|A_t-A_{t-1}|_\infty+|\mathcal{E}_t-\mathcal{E}_{t-1})|_{\infty})<\infty.
		\end{equation}
		Then,  there exists an $n$-dimensional random variable $X$ such that 
		\begin{equation}\label{e:llc11}
			X_t \Longrightarrow X
		\end{equation}
  where the convergence is distribution as pointed in \eqref{e:convdist}.
		\item Furthermore, if $\gamma_t$ is integrable but not constant almost surely, then, without condition \eqref{e:ll1b}, the conclusion of \eqref{e:llc11} does not hold.
	\end{enumerate}
\end{theorem}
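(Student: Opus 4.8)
The plan is to base everything on the scalar recursion already extracted in the text, namely $|Y_t|_\infty \le \rho_t|Y_{t-1}|_\infty + C|\gamma_t|_\infty$ with $Y_t = X_t-\ms$ and $C=\sup_t|\mathcal{E}_t|_\infty$. Unrolling this gives the master bound
\[
|Y_t|_\infty \le P_{t,0}\,|Y_0|_\infty + C\sum_{s=1}^{t} P_{t,s}\,|\gamma_s|_\infty, \qquad P_{t,s} := \rho_t\rho_{t-1}\cdots\rho_{s+1}\ \ (P_{t,t}=1).
\]
The engine of the whole theorem is to turn hypothesis \eqref{e:ll1} into geometric decay of these coefficients. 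Writing $S_t$ for the sum appearing in \eqref{e:ll1}, one checks the recursion $S_t=\rho_t(1+S_{t-1})$, hence $\rho_u=S_u/(1+S_{u-1})$ and the telescoping identity $P_{t,s}=\frac{1+S_t}{1+S_s}\prod_{u=s+1}^{t}\frac{S_u}{1+S_u}$. Since $\sup_tS_t=:M<\infty$, each factor satisfies $\tfrac{S_u}{1+S_u}\le\lambda:=\tfrac{M}{1+M}<1$, so $P_{t,s}\le(1+M)\lambda^{t-s}$; in particular $P_{t,0}\to0$ and $\sum_{s=1}^{t}P_{t,s}\le 1+M$ uniformly. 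This single estimate is what I will use repeatedly.

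For parts (1)--(3) the initial term $P_{t,0}|Y_0|_\infty\to0$ always, so only $\Sigma_t:=\sum_sP_{t,s}|\gamma_s|_\infty$ matters. For almost sure convergence I would argue pathwise: on the full-measure event $\{\gamma_t\to0\}$ the array $(P_{t,s})$ has bounded row sums and vanishing columns, so the classical null-sequence Toeplitz lemma gives $\Sigma_t\to0$. For $L^p$ convergence, Minkowski's inequality and the geometric bound give $\|\Sigma_t\|_p\le(1+M)\sum_s\lambda^{t-s}a_s$ with $a_s=\||\gamma_s|_\infty\|_p\to0$, and a split into a short recent window plus a geometric tail finishes it. The in-probability case is the delicate one and I expect it to be the main obstacle: the naive Toeplitz statement is \emph{false} in probability (a triangular array with only polynomially decaying weights and heavy-tailed $g_s\to0$ in probability can keep $\Sigma_t$ bounded away from $0$), so the geometric decay $P_{t,s}\le(1+M)\lambda^{t-s}$ is essential here. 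The route I would take is to truncate $|\gamma_s|_\infty$ at a small level $\eta$, absorb the truncated part into $\eta(1-\lambda)^{-1}$, treat a fixed recent block $s\in(t-K,t]$ by finite additivity of convergence in probability, and control the remaining geometric tail; making that tail uniformly small is exactly the step that forces one to use the geometric weights rather than mere column-vanishing.

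For part (4) I would drop $\ms$ and set $B_t=A_t-\mathcal{E}_t$, so $X_t=\sum_{s=1}^{t}C_{t,s}\gamma_s+D_tX_0$ with $C_{t,s}=B_t\cdots B_{s+1}\mathcal{E}_s$ (so $C_{t,t}=\mathcal{E}_t$) and $D_t=B_t\cdots B_1$. The geometric estimate gives $|D_t|_\infty\le(1+M)\lambda^t\to0$ and, crucially, the uniform-in-$t$ domination $|C_{t,t+1-u}|_\infty\le(1+M)C\lambda^{u-1}$. Since the $\gamma_s$ are i.i.d., reversing the order yields $X_t\stackrel{d}{=}\tilde X_t:=\sum_{u=1}^{t}C_{t,t+1-u}\gamma_u+D_tX_0$, and it is this reversed sum that converges almost surely. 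Here \eqref{e:ll1b} enters: summable increments make $A_t,\mathcal{E}_t$ Cauchy, hence convergent to $A_\infty,\mathcal{E}_\infty$, so for each fixed $u$ the finite product $C_{t,t+1-u}\to B_\infty^{\,u-1}\mathcal{E}_\infty$. Dominated convergence for series, with dominating summable sequence $(1+M)C\lambda^{u-1}|\gamma_u|_\infty$, then gives $\tilde X_t\to\sum_{u\ge1}B_\infty^{\,u-1}\mathcal{E}_\infty\gamma_u$ almost surely, and since $X_t\stackrel{d}{=}\tilde X_t$ this establishes \eqref{e:llc11}.

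Finally, for part (5) I would produce a counterexample violating only \eqref{e:ll1b}. Take the coefficients two-periodic, e.g. $\mathcal{E}_t$ (equivalently $B_t$) alternating between two distinct values chosen so that $\rho_t\le\lambda<1$ for every $t$ — then \eqref{e:ll1} holds while $\sum_t|\mathcal{E}_t-\mathcal{E}_{t-1}|_\infty=\infty$, so \eqref{e:ll1b} fails. Applying the backward argument of part (4) separately along even and odd times (where the coefficients are now constant and do converge) shows $X_{2k}\stackrel{d}{\to}\nu_0$ and $X_{2k+1}\stackrel{d}{\to}\nu_1$, the two limit laws being the distributions of $\sum_{u\ge0}c^{(0)}_u\gamma_{-u}$ and $\sum_{u\ge0}c^{(1)}_u\gamma_{-u}$ with different leading coefficients (since $C_{t,t}=\mathcal{E}_t$ alternates). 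For $\gamma$ integrable and non-degenerate these laws differ — the cleanest check is that their characteristic functions, the infinite products $\prod_u\varphi_\gamma(c^{(p)}_u\xi)$, disagree, and one may preselect $\gamma$ (say Bernoulli) to make this transparent — so the full sequence $X_t$ has no distributional limit. The only genuinely fiddly point here is certifying $\nu_0\ne\nu_1$, which I would reduce to the mismatch of leading coefficients together with non-degeneracy of $\gamma$.
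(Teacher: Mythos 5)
Your telescoping identity is correct and is actually sharper than what the paper proves: writing $S_t$ for the sum in \eqref{e:ll1} and using $S_t=\rho_t(1+S_{t-1})$, your bound $P_{t,s}\le(1+M)\lambda^{t-s}$ with $\lambda=M/(1+M)$ improves the paper's Cauchy--Schwarz estimate $\rho_t\rho_{t-1}\dots\rho_{t-s+1}\le A^2/s$ (their \eqref{e:llc10}). Parts (1) and (3) then go through essentially as in the paper. Part (4) is correct and is a genuinely different route: the paper shows $(X_t)$ is Cauchy in a Wasserstein-type metric $D$ via the one-step contraction $D(X_t,X_{t-1})\le\rho_t D(X_{t-1},X_{t-2})+C\alpha_t$ together with summability of $\alpha_t$ from \eqref{e:ll1b}, whereas you reverse the iid innovations (backward iteration) and obtain almost sure convergence of the reversed sums; your argument has the added benefit of identifying the limit law explicitly as that of $\sum_{u\ge1}B_\infty^{u-1}\mathcal{E}_\infty\gamma_u$, with \eqref{e:ll1b} used only to guarantee $A_t\to A_\infty$ and $\mathcal{E}_t\to\mathcal{E}_\infty$.

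Part (2) is where your plan breaks, and it cannot be repaired: geometric weights do \emph{not} suffice, because convergence in probability permits rare but enormous values of $\gamma_s$. Concretely, take $n=1$, $A_t=1$, $\eps_t=1/2$, so $\rho_t=1/2$ and \eqref{e:ll1} holds, and let $\gamma_t$ be independent with $\p(\gamma_t=2^t)=1/t$ and $\p(\gamma_t=0)=1-1/t$. Then $\gamma_t\to0$ in probability, yet with $X_0=\ms$ one has $X_t-\ms=\sum_{s=1}^t2^{-(t-s+1)}\gamma_s\ge 1/2$ as soon as $\gamma_s=2^s$ for some $s\in[t/2,t]$, an event whose probability tends to $1/2$ (the product $\prod_{s}(1-1/s)$ over $s\in[t/2,t]$ telescopes to roughly $1/2$). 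So $X_t\not\to\ms$ in probability even though the weights here are exactly geometric: your ``control the remaining geometric tail'' step is impossible, and indeed conclusion (2) is false as literally stated. The paper's own proof survives only because its auxiliary Lemma assumes in addition \eqref{e:llc2}, namely $\p(\sup_n|u_n|<\infty)=1$, a hypothesis that is not implied by convergence in probability and does not appear in the theorem. Your gap thus sits exactly where the paper's does, but your sketch asserts the step can be completed when it cannot.

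Part (5) also has a genuine gap. The two-periodic construction is fine for violating \eqref{e:ll1b} while keeping \eqref{e:ll1}, and your backward argument does give $X_{2k}\Rightarrow\nu_0$ and $X_{2k+1}\Rightarrow\nu_1$. But the entire content of the claim is that $\nu_0\ne\nu_1$ for \emph{every} integrable, non-constant $\gamma$ --- the statement quantifies over $\gamma$, and the paper's proof never specializes it. ``Mismatch of leading coefficients plus non-degeneracy'' is not an argument: distinct coefficient sequences can produce identical laws (permuted coefficients already do), and a variance comparison is unavailable when $\gamma$ is merely integrable; preselecting a Bernoulli $\gamma$ proves only a weaker existence statement. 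To exclude $\nu_0=\nu_1$ in general you need precisely the paper's machinery: if $\nu_0=\nu_1=\nu$, then $\phi_\nu(\xi)=\phi_\nu((1-x)\xi)\phi_\gamma(x\xi)$ for both alternating values $x$; choosing them to be $1/4$ and $3/4$, the paper's ratio/rescaling trick yields $\nu=\mathrm{law}(\gamma)$, and then a (weighted) law of large numbers --- this is where integrability enters --- forces $\gamma$ to be constant, a contradiction. Without importing that characteristic-function argument, your part (5) is not established.
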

\begin{proof}
\begin{enumerate}
\item  From our base model in terms of $Y_t$ is 
\begin{equation}\label{e:ll6}
	Y_t=(A_t  -\mathcal{E}_t)Y_{t-1} + \mathcal{E}_t \gamma_t .
\end{equation}	
From this we get 
\begin{equation}\label{e:ll4}
	|Y_t|_\infty \leq \rho_t |Y_{t-1}|_\infty + C|\gamma_t|_{\infty}.
\end{equation}
If we assume that $|\gamma_t|_{\infty} \xrightarrow[t \to \infty]{a.s} 0$,  then we get that $|Y_t|\xrightarrow[t \to \infty]{a.s}0$.  Indeed, this becomes a purely deterministic statement. For a given $\epsilon >0$, we can find that $|\gamma_t|_{\infty} \leq \epsilon$ for all $t \geq t_{\epsilon}$. Then,
\[
|Y_t|_\infty \leq \rho_t |Y_{t-1}|_\infty + C\epsilon \quad \forall t \geq t_{\epsilon}.
\] 
Using the previous inequalities for $t-1$, $t-2$, \dots, $t_\epsilon$  gives that 
\[
|Y_t|_\infty \leq (\prod_{s=t_\epsilon}^t \rho_{s}) |Y_{t_\epsilon -1}|_\infty + C\epsilon (1+ \rho_t  + \cdots+ \prod_{s=t_\epsilon}^t \rho_{s}).
\] 
From \eqref{e:ll1} combined with the following elementary lemma we show that $|Y_t|_\infty \xrightarrow[t\to\infty]{a.s.}0$.  

\begin{lemma}\label{L:2b} Assume that $\{\rho_t\}_{t\ge1}$ is a sequence of non-negative numbers such that for some $A>0$,  and any $t\ge1$,  
	\begin{equation}\label{e:ll13}
		1+\rho_t+\rho_{t}\rho_{t-1}+\dots+\rho_{t}\rho_{t-1}\dots \rho_1\le A. 
	\end{equation}
	Then, for $0\le s\le t-1$, 
	\begin{equation}\label{eL:2:2}
		\rho_{t}\rho_{t-1}\dots\rho_{s+1}\le A e^{-c(t-s)}, \text{ where }c=\ln\left (1+1/A\right),
	\end{equation}
	and in addition, 
	\begin{equation}\label{e:llc3}
		\rho_t\rho_{t-1}\dots \rho_{t-s+1}(1+\rho_{t-s}+\rho_{t-s}\rho_{t-s-1}+\dots +\rho_{t-s}\dots\rho_{1})\le A^2 e^{-c s}. 
	\end{equation}
	
	It is also true that \eqref{eL:2:2} for some constants $c>0$ and $A>0$ implies \eqref{e:ll13} with the bound on the right being $A/(e^c-1)$.    
\end{lemma}

\begin{proof}  To see this we first denote 
\[
A_t=\rho_t+\rho_{t}\rho_{t-1}+\dots+\rho_{t}\rho_{t-1}\dots\rho_{0}.
\]
Then we get that 
\[
\rho_{t}=\frac{A_t}{1+A_{t-1}}
\]
and thus 
\[
\begin{split}
	&\rho_{t}\rho_{t-1}\dots\rho_{s+1}= \frac{A_t}{1+A_{t-1}}\frac{A_{t-1}}{1+A_{t-2}}\dots\frac{A_{s+1}}{1+A_{s}} \\ 
	&=\frac{A_t}{1+A_{s}}\left(1-\frac{1}{1+A_{t-1}} \right)\left(1-\frac{1}{1+ A_{t-2}} \right)\cdots \left(1-\frac{1}{1+ A_{s+1}} \right) \\
	&\le A\left(1-\frac{1}{1+A}\right)^{t-s}=Ae^{-c(t-s)}.  
\end{split}
\]
To see \eqref{e:llc3}, we only need to notice that 
\[
\rho_t\rho_{t-1}\dots \rho_{t-s+1}(1+\rho_{t-s}+\rho_{t-s-1}\rho_{t-s-2}+\dots +\rho_{t-s}\dots\rho_{1})\le A^2e^{-cs}.  
\]

It is a simple exercise to go from \eqref{eL:2:2} back to \eqref{e:ll13}.  
\end{proof}

\item 
If we only assume a weaker condition, namely that $\gamma_t\xrightarrow[t \to \infty]{\p}0$ (only convergence in probability), then iterating \eqref{e:ll4} we obtain 
\begin{equation}\label{e:ll5}
|Y_t|_{\infty}\le (\prod_{s=1}^t\rho_s)|Y_0|_{\infty}+\sum_{s=0}^{t}(\prod_{i=t-s+1}^t\rho_i)|\gamma_{t-s}|_{\infty}
\end{equation}
with the convention that $\prod_{i=t+1}^t\rho_i=1$.  

To finish the proof off we use the following Lemma with $u_t=|\gamma_t|_{\infty}$.

\begin{lemma}\label{L:2}
	Let $(u_n)_{n\geq 1}$ be a random sequence such that
	\begin{align}
		&\label{e:llc1} u_n \xrightarrow[n \to \infty]{\p}0 \\
		&\label{e:llc2} \p(\sup_{n \geq 1}|u_n|< \infty)=1.
	\end{align}
	Then, under the assumption \eqref{e:ll1}, we have the convergence $\sum_{i=1}^{t}\rho_t\rho_{t-1}\dots \rho_{t-i+1}u_{t-i}\xrightarrow[t \to \infty]{\p} 0.$
\end{lemma}

\begin{proof}
	
	For the argument, denote for simplicity of writing $\eta_{t,i}=\rho_t\rho_{t-i}\dots \rho_{t-i+1}$.  
	
	Now, we fix $s\le t$ and write  
	\[
	|\sum_{i=1}^{t} \eta_{t,i} u_{t-i}| 
	\leq \sum_{i=1}^{s-1} \eta_{t,i} |u_{t-i}| + \sum_{i=s}^{t} \eta_{t,i}|u_{t-i}|
	\]
	Now, for a given $\epsilon$ and  $|\sum_{i=1}^{t} \eta_{t,i} u_{t-i}|>\eps$, we must have that at least one of the above sums must be at least $\eps/2$, thus,  we can write for each fixed $\epsilon>0$, 
	\begin{equation}\label{e:1}
		\p(|\sum_{i=1}^{t} \eta_{t,i} u_{t-i}|>\eps)\le \p(\sum_{i=1}^{s-1} \eta_{t,i} |u_{t-i}|\ge\eps/2)+\p(\sum_{i=s}^{t} \eta_{t,i} |u_{t-i}|>\eps/2).
	\end{equation}
	The next step is to use the boundedness of $u_t$.   Take arbitrary 
	$\delta,M>0$, (here $\delta$ is meant to be small and $M$ to be large) and then set 
	\[
	A_{M}=\{|u_n|\le M\text{ for all }n\ge1 \}.
	\]
	From the condition \eqref{e:llc2} we definitely have that $\p(A_M)$ converges to $1$ as $M$ tends to infinity. 
	Therefore we can continue the equation \eqref{e:1} with 
	\begin{align*}
		\p(|\sum_{i=1}^{t} \eta_{t,i} u_{t-i}|>\eps)&\le\p(\sum_{i=1}^{s-1}\eta_{t,i} |u_{t-i}|\ge \eps/2) +\p(\sum_{i=s}^{t} \eta_{t,i} |u_{t-i}|>\eps/2,A_M) + \p(\sum_{i=s}^{t} \eta_{t,i}|u_{t-i}|>\eps/2,A_M^c) \\
		& \le \sum_{i=1}^{s-1}\p(\eta_{t,i} |u_{t-i}|\ge \eps/(2(s-1)))
		+\p(M\sum_{i=s}^{t} \eta_{t,i}>\eps/2,A_M)+\p(A_M^c) \\
		& \le \sum_{i=1}^{s-1}\p(\eta_{t,i} |u_{t-i}|\ge \eps/(2(s-1)))
		+\p(\sum_{i=s}^{t} \eta_{t,i}>\eps/(2M))+\p(A_M^c)\\ 
		& \le \sum_{i=1}^{s-1}\p(\eta_{t,i} |u_{t-i}|\ge \eps/(2(s-1)))
		+\p(A^2e^{-cs}>\eps/(2M))+\p(A_M^c)
	\end{align*}
	where in the passage from the first line to the second we 
	used the union bound, more precisely, if we have 
	$\sum_{i=1}^s\eta_{t,i} |u_{t-i}|\ge \eps/2$ then at least one of the 
	terms must be $\ge \eps/(2s)$ plus the union bound on the probability.   Finally in passage to the last line we simply used \eqref{e:llc3}.  
	
	Next we can freeze for now $\eps,s,M$ and use the fact that for each $i$, 
	$\eta_{t,i} u_{t-i}$ converges to $0$ in probability since $\eta_{t,i}$ is bounded by $A>0$ and use \eqref{e:llc3}  to argue that the limit as $t\to\infty$ we gain that 
	\[
	0\le \limsup_{t\to\infty}\p(|\sum_{i=1}^{t} \eta_i u_{t-i}|>\eps)\le 
	\p(A^2 e^{-cs}>\eps/(2M))+\p(A_M^c).
	\]
	For large $s$, obviously $\p(A^2e^{-cs}>\eps/(2M))=0$ and thus we arrive at   
	\[
	0\le \limsup_{t\to\infty}\p(|\sum_{i=1}^{t} \eta_i u_{t-i}|>\eps)\le \p(A_M^c).
	\]
	From this, we take the limit as $M\to\infty$ and using \eqref{e:llc1}
	\[
	0\le \limsup_{t\to\infty}\p(|\sum_{i=1}^{t} \eta_i u_{t-i}|>\eps)=0
	\]
	which means convergence of $\sum_{i=1}^{t} \eta_i u_{t-i}$ to $0$ in probability. 
\end{proof}

Now let's return to the proof of the Theorem.  

\item For the $L^p$ convergence we just need to take expectation of \eqref{e:ll5}. 

\item 
For the convergence in distribution we start by writing 
\[
X_t=B_tX_{t-1}+\mathcal{E}_t\gamma_t
\]
where $B_t=A_t-\mathcal{E}_t$.  The idea is that because $\gamma_t$ are in $L^1$ so are all the variables $X_t$.  We are going to use the Wasserstein distance to control the difference between the distributions of $X_t$ and $X_{t-1}$. 

The basic idea is that in a slightly modified Wasserstein distance $D$ we have a contraction in the sense that there exists some $\rho<1$ such that  
\begin{equation}\label{e:n3}
	D(X_t,X_{t-1})\le \rho D(X_{t-1},X_{t-2}). 
\end{equation}
For the sake of completeness we define here for two $n$-dimensional random variables, $X,Y$ or better for their distributions $\mu_X,\mu_Y$, 
\begin{equation}\label{e:D}
	D(X,Y)=\left(\inf_{\alpha}\int  |x-y|_\infty\alpha(dx,dy) \right)=\inf_{\alpha}\E[|\tilde{X}-\tilde{Y}|_\infty]
\end{equation}
where $\alpha$ is a $2n$-dimensional distribution with marginals $\mu_X$ and $\mu_Y$ and $\tilde{X}$ $\tilde{Y}$ are two random variables on the same probability space (we call it a coupling) with the same distributions as $X$, respectively $Y$.  The second equality follows easily from taking $\tilde{X}$ and $\tilde{Y}$ to be the projections from $\pi_{i}:\R^n\times\R^n\to\R^n$, given by $\pi_1(x,y)=x$ while $\pi_2(x,y)=y$.  To go from the pair $(\tilde{X},\tilde{Y})$ back to the measure $\alpha$, we just need to take $\alpha$ to be the distribution of the pair $(\tilde{X},\tilde{Y})$.    

The standard Wasserstein distance is defined as 
\[
W_1(X,Y)=\left(\inf_{\alpha}\int  |x-y|\alpha(dx,dy) \right)=\inf\E[|\tilde{X}-\tilde{Y}|]. 
\]

Because any two norms on $\R^n$ are equivalent, we can find two constants $c_1,c_2>0$ such that 
\[
c_1 W_1(X,Y)\le  D(X,Y)\le c_2W_1(X,Y).  
\]
It is known that $W_1$ gives the topology of weak converge on the space of probability measures with finite first moment (that is $\int |x|\mu(dx)<\infty$).  Due to the above inequality we also infer the completeness with respect to the metric $D$ on the same space $\mathcal{P}_1(\R^n)$.  

To carry on this program we define for a distribution $\mu$, the following map
\[
F_t(\mu)=\text{ the distribution of } g_t(X_{t-1},\gamma) \text{ with } g_t(x,\lambda)=(A_t-\mathcal{E}_t)x+\mathcal{E}_t\lambda, x,\lambda,\in\R^n,
\]
where $X$ is a random variable with distribution $\mu$ and $\gamma$ is a random variable independent of $X$ and having the same distribution as the sequence $\gamma_t$. 

Now we want to look at $D(X_t,X_{t-1})$ and estimate it from above.  To do this assume that we have a coupling between $X_{t-1}$ and $X_{t-2}$ and then we can create an optimal coupling between $X_t$ and $X_{t-1}$ (with respect to the distance $D$, which certainly exists from Kantorovich general result) and then take $\gamma$ independent of both $X_{t-1}$ and $X_{t-2}$ and use 
\begin{align*}
	X_t-X_{t-1}&=(A_t-\mathcal{E}_t)X_{t-1}+\mathcal{E}_t\gamma-(A_{t-1}-\mathcal{E}_{t-1})X_{t-2}-\mathcal{E}_{t-1}\gamma \\ 
	&=(A_t-\mathcal{E}_t)(X_{t-1}-X_{t-2})+(A_t-A_{t-1}-\mathcal{E}_t+\mathcal{E}_{t-1})X_{t-2}+(\mathcal{E}_t-\mathcal{E}_{t-1})\gamma.
\end{align*}
Taking $|\cdot|_{\infty}$ and the expectation both sides we get the estimate 
\begin{equation}\label{e:ll6b}
	\begin{split}
		\E[|X_t-X_{t-1}|_{\infty}]&\le \E[|(A_t-\mathcal{E}_t)X_{t-1}|_\infty]+\E[|(A_t-A_{t-1}-\mathcal{E}_t+\mathcal{E}_{t-1})X_{t-2}|_1] \\ \qquad &+\E[|(\mathcal{E}_t-\mathcal{E}_{t-1})\gamma|_\infty]\\
		& \le \rho_t\E[|X_{t-1}-X_{t-2}|]+\alpha_{t}(\E[|X_{t-2}|_\infty]+\E[|\gamma|])
	\end{split}
\end{equation}
where we denoted by 
\[
\alpha_t=|A_t-A_{t-1}|_\infty+|\mathcal{E}_t-\mathcal{E}_{t-1}|_\infty. 
\]

Notice that in the time independent case, the terms $\alpha_t$ is  0, which  implies that $X_t$ converges in distribution.   

In the general case we need to use the extra conditions from \eqref{e:ll1b}.  From the above considerations we actually show first that the expectation of $X_t$ obeys the equation (keep in mind that $\sup_{t\ge1}|\mathcal{E}_t|_\infty\le A+1$)  
\[
\E[|X_t|_\infty]\le \rho_{t}\E[|X_{t-1}|_\infty]+(A+1)\E[|\gamma|_\infty].  
\]
Using this and the standard iterations combined with \eqref{e:ll1} we get that 
\[
\sup_{t}\E[|X_t|_\infty]<C<\infty. 
\]
On the other hand from \eqref{e:ll6} we get that 
\begin{equation}\label{e:ll12}
	D(X_t,X_{t-1})\le \rho_t D(X_{t-1},X_{t-2})+C\alpha_t.  
\end{equation}
Using this and a simple iteration it leads to 
\[
D(X_t,X_{t-1})\le \rho_t \rho_{t-1}\dots\rho_2 D(X_1,X_0)+C(\alpha_t +\alpha_{t-1}\rho_t+\alpha_{t-1}\rho_{t}\rho_{t-1}+\dots+\alpha_{1}\rho_t\rho_{t-1}\dots\rho_{1}).  
\]
In particular, summing this over $t$ from $t$ to $t+s$, leads to 
\[
D(X_t,X_{t+s})\le \sum_{i=1}^s\rho_{t+i-1}\dots \rho_2 D(X_1,X_0)+C\sum_{k=1}^{t+s}\alpha_k \sum_{i=1}^{s}\rho_{t+i}\rho_{t+i-1}\dots \rho_{k}.
\]
According to \eqref{e:llc3} we conclude that the sum $\sum_{i=1}^s\rho_{t+i-1}\dots \rho_2$ converges to $0$ as $s,t\to\infty$.  We will show that the other sum also converges to $0$ as both $t,s\to\infty$.  To this end notice that from \eqref{e:ll1b}, we can set 
\[
\beta_t=\sum_{i\ge t}\alpha_i.
\]
and write $\alpha_t=\beta_t-\beta_{t+1}$.  After rearrangements, this leads to   
\[
\sum_{k=1}^{t+s}\alpha_k \sum_{i=1}^{s}\rho_{t+i}\rho_{t+i-1}\dots \rho_{k}=\beta_1\rho_{t+s}\rho_{t+s-1}\dots \rho_{1} +\beta_2\rho_{t+s}\rho_{t+s-1}\dots \rho_{1}+\dots +\beta_{t+s}.
\]

The first term converges to 0 because of \eqref{eL:2:2} and the rest, converges to $0$ because of Lemma~\ref{L:2} thanks to the fact that $\beta_t$ converges to $0$, this converges to $0$. 

This proves the convergence in distribution.   

\item Next we show that the condition \eqref{e:ll1b} is also a necessary condition.  Indeed, if we take the one dimensional case with 
\[
X_t=X_{t-1}+\epsilon_t(\gamma_t-X_{t-1})
\]
such that 
\[
|\epsilon_t-\epsilon_{t-1}|=1/(10t) \text{ for }t\ge1  
\]
In fact we will choose 
\[
\epsilon_{t}=1/2+c\sum_{k=1}^t w_{i}/i
\]
and we will choose $w_i=\pm1$ in the following fashion.  First we take all $w_1,w_2,\dots, w_{\tau_1}$ such that $\epsilon_{\tau_1}\le 3/4$ but $3/4<\epsilon_{\tau_1}+c/(\tau_{1}+1)$.  Notice that we can do this because the harmonic series is divergent.  Now, we choose $\tau_2>\tau_1$ such that $w_{\tau_1+1}=w_{\tau_1+2}=\dots=w_{\tau_2}=-1 $ and $\epsilon_{\tau_2}-1/(10(\tau_2+1))<1/4\le \epsilon_{\tau_2}$.  Now we choose $\tau_3>\tau_2$ and $w_{t_2+1}=\dots= w_{t_3}=1$ such that $\epsilon_{\tau_3}\le 3/4<\epsilon_{\tau_3}+c/(\tau_{3}+1)$.  Then we choose $\tau_4>\tau_3$ such that $w_{\tau_3+1}=w_{\tau_3+2}=\dots=w_{\tau_4}=-1 $ such that $\epsilon_{\tau_4}-1/(10(\tau_4+1))<1/4\le \epsilon_{\tau_4}$.  And we continue inductively.  Thus we have defined a sequence $\epsilon_t$ such that 
\[
1/4\le \epsilon_t\le 3/4 \text{ such that } \overline{\{\epsilon_t\}_{t\ge1}}=[1/4,3/4].      
\]
In other words the limit points of the sequence $\epsilon_t$ is just the interval $[1/4,3/4]$ and obviously the condition \eqref{e:ll1} is fulfilled.  

With this choice of the sequence $\epsilon_t$, we claim that the sequence $X_t$ does not converge in distribution.  Indeed the argument is based on the simple observation that if it were, then taking the characteristic functions $\phi_{X_t}$ we would get 
\[
\phi_{X_t}(\xi)=\phi_{X_{t-1}}((1-\epsilon_t)\xi)\phi_{\gamma}(\epsilon_t\xi).  
\]
As a recall, $\phi_X(\xi)=\E[e^{i\xi X}]$ for any $\xi\in\R$.  
In particular this means that if $X_t$ converges to some random variable $Y$, then taking a subsequence $t_n$ for which $\epsilon_{t_n}\xrightarrow[n\to\infty]{}x$ we obtain that 
\begin{equation}\label{e:ll11}
	\phi_Y(\xi)=\phi_{Y}((1-x)\xi)\phi_{\gamma}(x\xi) \text{ for any }x\in[1/4,3/4].  
\end{equation}
Under the assumption that $\gamma$ is integrable we claim that $\gamma$ must be constant and also $X$ is going to be the same constant.  To carry this out we argue that for $x=1/4$ and $x=3/4$ we get that 
\[
\frac{\phi_{\gamma}(3\xi/4)}{\phi_{\gamma}(\xi/4)}=\frac{\phi_{Y}(3\xi/4)}{\phi_Y(\xi/4)}. 
\]
Replacing $\xi$ by $4\xi/3$ we arrive at 
\[
\frac{\phi_\gamma(\xi)}{\phi_\gamma(\xi/3)}=\frac{\phi_Y(\xi)}{\phi_Y(\xi/3)}. 
\]
Replacing here $\xi$ by $\xi/3$, $\xi/3^2$, \dots, $\xi/3^n$ and multiplying these we obtain
\[
\frac{\phi_\gamma(\xi)}{\phi_\gamma(\xi/3^n)}=\frac{\phi_Y(\xi)}{\phi_Y(\xi/3^n)}. 
\]
Now letting $n\to\infty$ and using the fact that for any random variable $Z$,  $\phi_Z(\xi/3^n)\xrightarrow[n\to\infty]{}1$ we obtain that 
\[
\phi_\gamma(\xi)=\phi_Y(\xi), 
\]
in other words, $Y$ has the same distribution as $\gamma$.   Using this in \eqref{e:ll11} with $x=1/2$ we arrive at 
\[
\phi_Y(\xi)=\phi_{Y}(\xi/2)^2.
\]
Iterating this we get 
\[
\phi_Y(\xi)=\phi_{Y}(\xi/2^n)^{2^n}
\]
which can be written alternatively as 
\begin{equation}\label{e:llc12}
	\phi_Y(\xi)=\phi_{\frac{Y_1+Y_2+\dots+Y_{2^n}}{2^n}}(\xi),
\end{equation}
where $Y_1,Y_2,\dots$ are iid with the same distribution as $Y$.  Since $Y$ and $\gamma$ have the same distributions and $\gamma$ is integrable, it follows that $Y$ is also integrable.  This in particular implies from the law of large numbers that $\frac{Y_1+Y_2+\dots+Y_{2^n}}{2^n}$ converges almost surely to $\E[Y]=\E[\gamma]$.  Since convergence almost surely implies convergence in distribution, we get that  
\begin{equation}\label{e:llc13}
	\phi_{Y}(\xi)=\phi_{\E[Y]}(\xi), 
\end{equation}
in other words, $Y$ must be constant.  This implies that $\gamma$ is also constant which then finishes the argument.  
\end{enumerate}
\end{proof}

As the proof is quite lengthy and requires two technical lemmas, we defer it to the appendix. Observe here the fact that in the last part of the Theorem we incorporated the constant $\ms$ into $\gamma_t$.  The convergence is in distribution sense and thus it does not lead to convergence as in the previous cases.  Even if we assume that $\gamma_t$ is of the form $\bar{\sigma}+\gamma_t$, the convergence will not be to $\ms$ alone.  Thus this is a different convergence scenario and in spirit is not of the same form as the other cases.   
\begin{remark}
We need to point out that integrability is key for the conclusion of the last part of Theorem~\ref{thm:blwn}. This shows the intricate relationship \eqref{e:ll1b} and \eqref{e:ll11}.  If we drop the integrability and for instance take $(\gamma_t)_{t\ge1}$ to be all iid Cauchy($1$) and $X_0=0$, then $X_t$ will also follow a Cauchy($1$) random variable for any choice of $0\le\epsilon_{t}\le 1$ with $\epsilon_1>0$.  Certainly, in this situation we do not need any other assumptions on $\epsilon$ or $\rho_t$ or condition \eqref{e:ll1b} to get convergence of $X_t$. We leave as an open problem the optimal conditions under which the model \eqref{e:llc14} converges as $t\to\infty$.  
\end{remark}

Condition \eqref{e:ll1b} elucidates a key behavioural aspect. Agents are comfortable and so stabilize their trust matrix $A$ and learning ability $\mathcal{E}$. Like members of a small village or an island, everyone knows over time how much trust to place on themselves and the other members.

\subsection{Simulations for convergence to distribution}
Let us illustrate Theorem \ref{thm:blwn} and result \ref{e:llc11}. Suppose that the noise $\gamma_{t}$ is a Normal random variable. Numerical simulations show that $X_t$ converges to a Gaussian random variable for each component: figure \ref{f:nonCLT1}. 
\begin{figure}%
\centering
\textbf{Joint plot of contours with marginals}\\
\subfloat[\centering Gaussian Noise]{{\includegraphics[width=6cm]{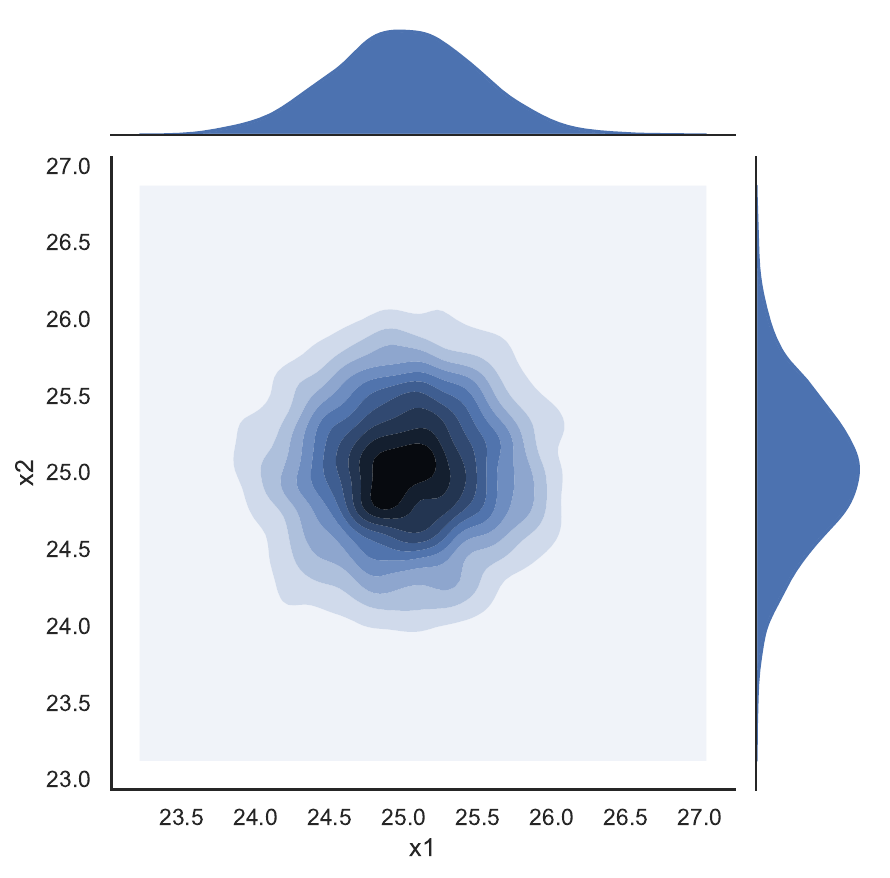} }}%
\qquad
\subfloat[\centering Non-Gaussian Noise]{{\includegraphics[width=6cm]{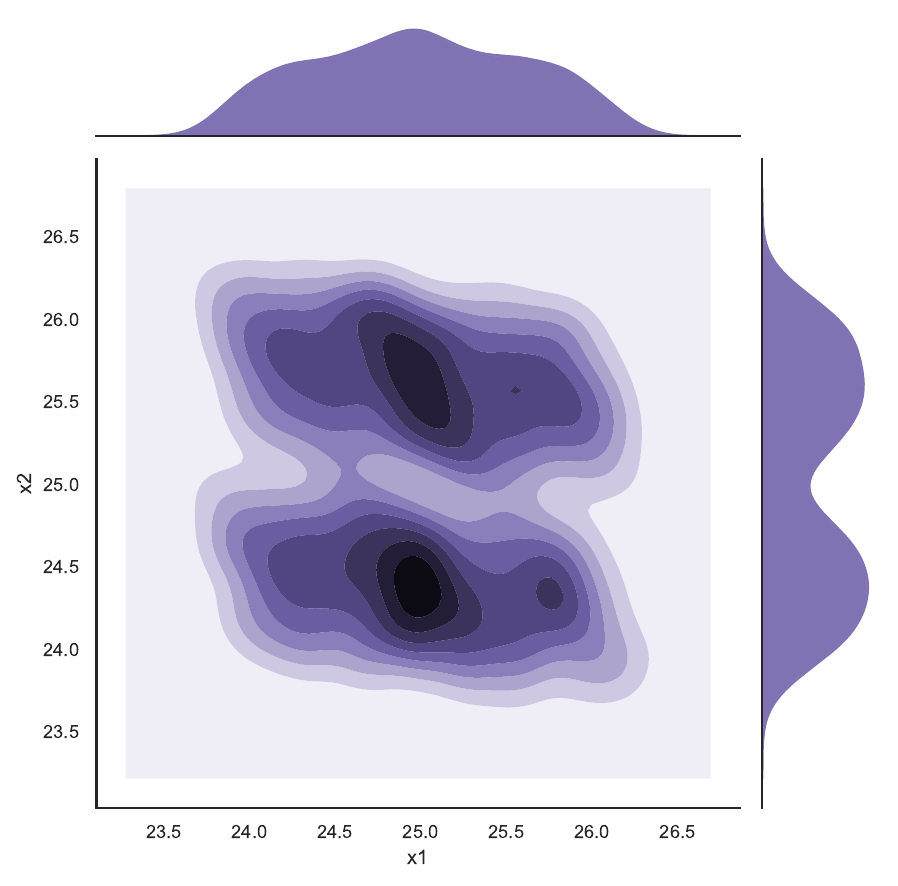} }}%
\caption{Plots illustrates the case for two agents who learn from each other with $A$ and $\mathcal{E}$ fixed. Variable x1 and x2 represent agents 1 and 2: (a) With Gaussian noise, $X_t$ converges to a Normal distribution. (b) Non-Gaussian noise generates a different asymptotic distribution.}\label{f:nonCLT1}%
\end{figure}

The the asymptotic distribution is Gaussian centered around the true value $\ms$. The key point is that we do not need to scale $X_t$. Suppose, the iid $(\gamma_{t})s$ are vectors of just $+1$ or $-1$, then $X_t$ converges in distribution. In figure \ref{f:nonCLT1} (b), the simulated distribution looks distinctly non-Gaussian. For other noises, different distributions can occur. Note that even if agents converge in distribution, their marginal distributions can be distinct. 

\section{Nonlinear learning}\label{s:nl} \label{s:Nonlinearlearning}
While Averaging (updating) is retained in this section, we develop nonlinear models of learning. Instead of $\mathcal{E}$, there is a non-linear function.

\begin{definition}
	The learning function is $f_t: \R^n \to \R^n$ is  continuous on some compact convex subset $K \subseteq \R^n$ and differentiable on its interior, with $f(0)=0$. Component wise it is
	\[
	f_t \left( \begin{pmatrix}
		x_1\\
		\vdots\\
		x_n
	\end{pmatrix} \right)=\begin{pmatrix}
		f_t(x_1)\\
		\vdots\\
		f_t(x_n)
	\end{pmatrix}.
	\]
\end{definition}

Notice that the update or feedback is now varying with time.  Learning or feed back stops when $\ms-X_t=0$, so the condition $f_t(0)=0$ ensures this.  
The updating rule for agent $i$ becomes 
\[
x^i_{t+1}=\sum_{j=1}^n (a_t)_{ij} x^j_{t}+f_{t,i}(\ms-x^i_{t}). 
\]
Moreover, the weights matrix $A_t$ is also time varying and row-stochastic. Previous sections showed convergence results of linear updating $f_{t,i}=\epsilon_i$, a fixed scalar. Actual updating of feedback can be be quite complex, and having a nonlinear feedback or learning rule allows us to expand the linear model. 

\begin{theorem} \label{thm:noisenonlinear}
	For $ \forall i \in \lbrace 1,\cdots,n \rbrace \mbox{ and } \forall t \geq 0$, suppose the learning function satisfies
	\begin{equation}\label{e:n1}
		0< \inf f_{t,i}^\prime \leq \sup f_{t,i}^\prime < 2(a_t)_{ii}, 
	\end{equation}
	and if we denote 
	\[
	\rho_t=\sup_i \sup_\xi \left(  |(a_t)_{ii}-f_{t,i}'(\xi)|+1-(a_t)_{ii}\right)
	\]
	 and we assume that 
	\begin{equation}\label{e:n4}
		\sup_{t\ge1}(\rho_t+\rho_{t}\rho_{t-1}+\dots +\rho_{t}\rho_{t-1}\rho_{t-2}\dots \rho_{1})<\infty.
	\end{equation}
	Then,
	\begin{enumerate}
		\item With the dynamics
		\[
		X_{t}=A_tX_{t-1}+ f_t(\ms - X_{t-1}),
		\] 
		consensus is reached and $\lim_{t \to \infty} X_t = \ms$.
		
		\item If the evolution is given by  
		\[
		X_t=A_tX_{t-1} + f_t(\ms +\gamma_t-X_{t-1})
		\]
		under the same assumption as in \eqref{e:n1}, then $\gamma_t \xrightarrow[t \to \infty]{} 0$ yields that $X_t \xrightarrow[t \to \infty]{} \ms$. \textit{(If the noise converges to zero $a.s$, in probability or in ${L^1}$, then $X_t$ converges accordingly)}.\\
		
		\item Again assume \eqref{e:n4} and 
		\begin{equation}\label{e:n10}
			X_t=A_tX_{t-1} + f_t(\gamma_t-X_{t-1})
		\end{equation}
		where the sequence $(\gamma_t)_{t\ge1}$ is assumed to be iid and integrable.  If in addition we have that 
		\begin{equation}\label{e:n5}
			\sum_{t\ge1}\left(|A_t-A_{t-1}|_\infty+\max_{i}\sup_{\xi\in\R}|f_{t,i}'(\xi)-f_{t-1,i}'(\xi)|\right)<\infty,
		\end{equation}
		then $X_t$ converges in distribution as $t\to\infty$ as in the definition of equation \eqref{e:convdist}.  
	
	\end{enumerate}
	
\end{theorem}

\begin{proof}

\begin{enumerate}
\item First we subtract $\ms$ from both sides of the dynamics equation. As $A$ is stochastic, $A(t)\ms=\ms$, hence
\[
(X_{t+1} -\ms)=A(t)(X_t-\ms) + f_t(\ms - X_{t-1}).
\]
Second, we recast the equation using the infinity-norm 
\[
|X_{t+1} -\ms|_\infty=\sup_{i}|(X_{t+1} -\ms)_i|.
\]
For individual $i$, the updating rule becomes
\begin{align*}
	(X_{t+1} -\ms)_i &=\sum_{j=1}^n (A_t)_{ij}(X_{t-1}-\ms)_j+ f_t(\ms - (X_{t-1})_i)\\
	&=\left((a_t)_{ii} -\dfrac{ f_{t,i}(\ms - (X_{t-1})_i)}{(\ms - (X_{t-1})_i)}\right)(X_{t-1}-\ms)_i+ \sum_{j\neq i}^n (A_t)_{ij}(X_{t-1}-\ms)_j\\
	&\leq\left(  |(a_t)_{ii}-f_{t,i}^\prime(\xi_i)|  |X_{t-1} -\ms|_i \right) +\sum_{j\neq i}^n (A_t)_{ij} |X_{t-1} -\ms|_j\\
	&\leq\left(  |(a_t)_{ii}-f_{t,i}^\prime(\xi_i)|  + 1-(a_t)_{ii}\right) |X_{t-1} -\ms|_\infty\\
	&\leq \sup_i \sup_\xi \left(  |(a_t)_{ii}-f_{t,i}^\prime(\xi_i)|  + 1-(a_t)_{ii}\right) |X_{t-1} -\ms|_\infty
\end{align*}
The second equality follows because the learning function is continuous and differentiable hence
\[
f_{t,i}(x)-f_{t,i}(0)=(x-0)f_{t,i}^\prime(\xi)\implies \dfrac{f_{t,i}(x)}{x}=f_{t,i}^\prime(\xi).
\] 
for some $\xi_i \in (0,x)$ by the Mean value theorem.

By assumption
\[
0< \inf f_{t,i}^\prime \leq \sup f_{t,i}^\prime < 2(a_t)_{ii}
\]

but this is equivalent there being some $ 0<\delta_i <1$ such that $\forall \xi \in \R$

\begin{equation}\label{eq:MVT2}
	\delta_i< f_{t,i}'(\xi) < 2(a_t)_{ii} - \delta_i.
\end{equation}

The above condition gives us two cases to consider. In the first case, ignoring dependence on t, for all $i \in \{1,\cdots,n\}$ and $\xi$
\[
a_{ii}>f_i'(\xi) \, \text{ (case 1) in which case, }
|a_{ii}-f_i'(\xi)|+1 -(a_{ii}) =1-f_{t,i}'(\xi)<1-\delta_i
\]
In the second case,
\[
a_{ii}\le f_i'(\xi) \, \text{ (case 1) in which case, }
|a_{ii}-f_i'(\xi)|+1 -(a_{ii}) =1 + f_{t,i}'(\xi)-2a_{ii}<1-\delta_i.
\]
Thus we obtain that 
\[
\sup_i \sup_\xi \left(  |(a_t)_{ii}-f_{t,i}^\prime(\xi_i)|  + 1-(a_t)_{ii}\right)<1-\min_{i}\delta_i<1
\]
thus we have a contraction in $|X_t-\ms|_\infty$ and consequently, 
\[
\lim_{t \to \infty} X_t = \ms.
\]

\item 
The deviation equation from consensus is
\[
(X_{t+1} -\ms)=A(t)( X_t -\ms)+ f_t(\ms +\gamma_t - X_{t-1}).
\]
Essentially the same steps follow as the in the proof with no noise
\begin{align*}
	(X_{t+1} -\ms)_i &=\sum_{j=1}^n (A_t)_{ij}(X_{t-1}-\ms)_j+ f_{t,i}(\ms +\gamma_t- (X_{t-1})_i)\\
	&=\left((a_t)_{ii} -\dfrac{ f_{t,i}(\ms +\gamma_t - (X_{t-1})_i)}{(\ms +\gamma_t- (X_{t-1})_i)}\right)(X_{t-1}-\ms-\gamma_t)_i+ \sum_{j\neq i}^n (A_t)_{ij}(X_{t-1}-\ms)_j\\
	&=\left((a_t)_{ii} -f_{t,i}^\prime (\xi)\right)(X_{t-1}-\ms)_i+ \sum_{j\neq i}^n (A_t)_{ij}(X_{t-1}-\ms)_j  +\gamma_t f_{t,i}^\prime (\xi)\\
	&=\left((a_t)_{ii} -f_{t,i}^\prime (\xi)\right)(X_{t-1}-\ms)_i+ (1-(a_t)_{ii})(X_{t-1}-\ms)_j +\gamma_t f_{t,i}^\prime (\xi)\\
	&\leq\left(  |(a_t)_{ii}-f_{t,i}^\prime(\xi_i)|  |X_{t-1} -\ms|_i \right) +(1-(a_t)_{ii})|X_{t-1} -\ms|_j +|\gamma_t| f_{t,i}^\prime (\xi)\\
	&\leq\left(  |(a_t)_{ii}-f_{t,i}^\prime(\xi_i)|  + 1-(a_t)_{ii}\right) |X_{t-1} -\ms|_\infty+|\gamma_t| f_{t,i}^\prime (\xi) \\
	&\leq \sup_i \sup_\xi \left(  |(a_t)_{ii}-f_{t,i}^\prime(\xi_i)|  + 1-(a_t)_{ii}\right) |X_{t-1} -\ms|_\infty+C|\gamma_t|
\end{align*}
The rest of the proof follows as in the proof of Theorem~\ref{thm:blwn}, more precisely, following the same argument starting with \eqref{e:ll4}.  In all instances the convergence follows the same arguments as in the linear case.  

\item 
First observe that from \eqref{e:n10} we get 
\[
\E[|X_t|_\infty]\le \rho_{t}\E[|X_{t-1}|_\infty] +2\E[\gamma]. 
\]
From this, iterating and using \eqref{e:n4} as in the linear case we obtain that 
\[
\sup_{t\ge1}\E[|X_t|_{\infty}]=C<\infty.  
\]

To treat the case where $\gamma_{t}$ are all iid, we follow the same argument as the linear case.  Here we have to use in the first place the distance defined in \eqref{e:D} and the argument for the estimate of 
$D(X_t,X_{t-1})$ we need to take a for any coupling $\tilde{X}_{t-1}$ and $\tilde{X}_{t-2}$ the coupling $A_t\tilde{X}_{t-1}+f_t(\gamma-\tilde{X}_{t-1})$ and $A_{t-1}\tilde{X}_{t-2}+f_{t-1}(\gamma-\tilde{X}_{t-2})$.  Then, 
\[
\begin{split}
	D(X_t,X_{t-1})&\le \E[|A_t\tilde{X}_{t-1}+f_t(\gamma-\tilde{X}_{t-1})-A_{t-1}\tilde{X}_{t-2}+f_{t-1}(\gamma-\tilde{X}_{t-2})|_{\infty}] \\
	& \le \E[|A_t\tilde{X}_{t-1}+f_t(\gamma-\tilde{X}_{t-1})-(A_{t}\tilde{X}_{t-2}+f_{t}(\gamma-\tilde{X}_{t-2})|_{\infty}] \\
	& \quad +\E[|A_{t}\tilde{X}_{t-2}+f_{t}(\gamma-\tilde{X}_{t-2})-(A_{t-1}\tilde{X}_{t-2}+f_{t-1}(\gamma-\tilde{X}_{t-2}))|_\infty] \\
	& \le \rho_t\E[|\tilde{X}_{t-1}-\tilde{X}_{t-2}|_\infty +\left(|A_t-A_{t-1}|_\infty+\max_{i}\sup_{\xi\in\R}|f_{t,i}'(\xi)-f_{t-1,i}'(\xi)|\right)\E[|X_{t-2}|_\infty] \\
	&\le \rho_tD(X_{t-1},X_{t-2})+C\left(|A_t-A_{t-1}|_\infty+\max_{i}\sup_{\xi\in\R}|f_{t,i}'(\xi)-f_{t-1,i}'(\xi)|\right).
\end{split}
\]
From this we proceed exactly in the same way as in the proof of the linear case, more precisely, the same proof following \eqref{e:ll12} to show that $X_t$ is Cauchy in the metric $D$.

\end{enumerate}
\end{proof}

Notice that the last part of the result above does not involve $\bar{\sigma}$ because it is actually hidden in the sequence $\gamma$.  As opposed to the other two cases, the convergence is only in distribution and in principle that is implicitly defined, it is not a constant variable as in the previous cases.

\begin{remark}
	Matrix $A(t)$ and learning function $f_t$ are allowed to be time dependent or slowly varying. They could be random but in a controlled way. Were $A$ and $f$ to be fixed in time, the above result would still hold. So the constant case is a special case of what we have shown.
\end{remark}

Continuity of the learning function $f_t$ is essential. We give an example of a situation where it breaks down.

\begin{example}
	Consider the sign function $\sign(x)=1$ for $x>0$, $\sign(0)=0$ and $\sign(x)=-1$ for $x<-1$.  If the learning $f$ were the sign function, then the dynamics would be 	
	\[
	X_{t}=A_tX_{t-1}+ \mathcal{E}\sign(\ms - X_{t-1}).
	\] 
	Consensus in this circumstance would not be achieved.  One can plainly see this in the one dimensional case of $A_t=1$, $\sigma=1$, $Y_t=X_t-\sigma$, $Y_0=1$ and take $1/3<\mathcal{E}<1/2$.  With this setup we get 
	\[
	Y_1=1-\mathcal{E},Y_2=1-2\mathcal{E}, Y_3=1-3\mathcal{E}, Y_4=1-2\mathcal{E},Y_5=1-3\mathcal{E},\dots
	\]
	which shows that $Y_t$ becomes periodic, thus not convergent.  We can extend this behavior to more general situations of course, though this periodic pattern still follows.  
\end{example}

\section{Limit Theorems in Distribution for time invariant models} \label{s:CLT}
We study a CLT result in the case of constant $A$ and $\mathcal{E}$. The  analysis for  \autoref{thm:blwn} required condition \eqref{e:ll1}. However, the use of this condition in the dynamics or iterations means a CLT is not possible. To obtain a general CLT result, we have to change this condition. Moreover, we examine the dynamics in more general form. In the previous theorem, to ensure consensus, each agent had to interact and learn. While there may be periods of no learning $(\eps_t)_{ii}=0$ for some players in which case $\rho_{t} \geq 1$ for short bouts, eventually all agents have to learn and have positive self-belief $(a_t)_{ii}$.  To see this from a much higher perspective, we consider here the case of constant Average plus Learning matrices and study the limiting behavior of the beliefs in which the noise is added to the model.  Thus we have
\begin{equation}\label{e:clt0}
	X_{t+1}=AX_{t}+\mathcal{E}(\bar{\sigma}-X_t)+\gamma_t
\end{equation}
where the noise $\gamma$ is assumed iid.  In fact, we can continue with the model we studied above where the noise was inside the learning part, namely
\[
X_{t+1}=AX_{t}+\mathcal{E}(\bar{\sigma}-X_t+\gamma_t).
\]
However, within the assumption that $A$ and $\mathcal{E}$ are constant we can simply redefine $\tilde{\gamma}_t=\mathcal{E}\gamma_t$ and with this change the above equation becomes
\[
X_{t+1}=AX_{t}+\mathcal{E}(\bar{\sigma}-X_t)+\tilde{\gamma}_t
\]
which is essentially the model \eqref{e:clt0}. We take one more step and rewrite the equations \eqref{e:clt0} in the form
\begin{equation}\label{e:clt0a}
	X_{t+1}-\bar{\sigma}=(A-\mathcal{E})(X_{t}-\bar{\sigma})+\gamma_t.
\end{equation}


Assume that the matrix $A-\mathcal{E}$ has a standard Jordan form
\begin{equation}\label{e:Jordanform0}
  A-\mathcal{E}=P^{-1}JP
\end{equation}
where $J$ is the Jordan decomposition of $A-\mathcal{E}$ with the blocks $(J_k)_{k=1,\dots, l}$ on the diagonal and $J_k$ having dimension $m_k\times m_k$ and being defined by the eigenvalue $\lambda_k$.  Here we can take the complex Jordan decomposition or the real decomposition.  The computations are cleaner with the complex decomposition however the statements we are going to make are easily transferable to the real case as well.   

Now consider
\[
\begin{split}
	\alpha &=\max\{|\lambda_i|:i=1,2,\dots,l\} \text{ and } W=\{i\in \{1,2,\dots,l\}: |\lambda_i|=\alpha\}, \\ 
	m&=\max\{m_i:i \in W\} \text{ and } W_{max}=\{i\in W: m_i=m\}
\end{split}
\]
and set $\mathcal{W}=\cup_{i\in W}\{\sum_{j=1}^i m_j,\dots, (\sum_{j=1}^{j+1}m_{j})-1\}$ and similarly  
\[\mathcal{W}_{max}=\cup_{i\in W_{max}}\{\sum_{j=1}^i m_j,\dots, (\sum_{j=1}^{j+1}m_{j})-1\} \]
which represents the index set in $\{1,2,\dots,n\}$ corresponding to the Jordan blocks $J_i$ with $i$ in $W$ or $W_{max}$.  Denote by 

\begin{equation}
	B=P^{-1} J_W P \text{ and } B_{max}=P^{-1}J_{W_{max}} P
\end{equation} 
where  $J_{W}$ ($J_{W_{max}}$) is the block matrix where only the blocks with indices contained in $W$ (or $W_{max}$) appear, all the others having been replaced by 0.  

Furthermore, we also introduce the matrices 
\begin{equation}\label{e:clt12}
	D_W, D_{W_{max}} \text{ and } L_W, L_{W_{max}}
\end{equation}
as the diagonal of the matrices $J_W$, $J_{W_{max}}$ 
respectively as the off diagonal parts
of $J_W$, $J_{W_{max}}$.  

In addition to these we will consider the following matrices 
\begin{equation}
	Q_{W}=\frac{1}{\alpha}P^{-1} D_W P \text{ and } Q_{W}^{\#}=\alpha P^{-1} D_W^{\#} P 
\end{equation}
where the inverse $D_W^{\#}$ is defined as matrix with the inverses of the non-zero blocks.  In general $D_W^{\#}$ is not invertible and the non-zero elements on the diagonal of $\alpha D^{\#}$ are on the unit circle.  Similarly, we define 
\begin{equation}
	Q_{max}=\frac{1}{\alpha}P^{-1} D_{W_{max}} P \text{ and } Q_{max}^{\#}=\alpha P^{-1} D_{W_{max}}^{\#} P.
\end{equation}

As an example, assume that the Jordan block matrix $J$ is the following
\begin{equation}\label{e:JordanBlock1}
    J=\mat{\lambda_1 & 1 & 0&0&0  \\ 0 & \lambda_1 & 0 & 0&0 \\ 0&0&\lambda_2 &1 &0 \\ 0&0&0&\lambda_2 &0   \\ 0&0&0&0&\lambda_3 & }
\end{equation}
where $\lambda_1,\lambda_2,\lambda_3$ are complex numbers. Clearly there are three Jordan blocks. 

In this case, if $|\lambda_1|>\max\{|\lambda_2|,|\lambda_3|\}$, then $\alpha=|\lambda_1|$ and the quantities described above become $W=\{1\}, m=2, W_{max}=\{1\}$ and 
\[
J_W=J_{W_{max}}=\mat{\lambda_1 & 1 & 0&0&0 \\ 
0 & \lambda_1 & 0 & 0&0 \\ 
0&0&0 &0 &0\\ 
0&0&0 &0 &0\\ 
0&0&0 &0 &0
},
L_W=L_{W_{max}}=\mat{0 &1 & 0&0&0  \\ 
0 & 0 & 0 & 0&0 \\ 
0&0&0 &0 &0\\ 
0&0&0 &0 &0\\ 
0&0&0 &0 &0
}
\]
with 
\[
D_W=D_{W_{max}}=\mat{\lambda_1 &0 & 0&0&0 \\ 0 & \lambda_1 & 0 & 0&0 \\ 
0&0&0 &0 &0\\ 
0&0&0 &0 &0\\ 
0&0&0 &0 &0
}, \alpha D_W^{\#}=\alpha D_{W_{max}}^{\#}=\mat{\alpha/\lambda_1 &0 & 0&0&0 \\
0 & \alpha/\lambda_1 & 0 & 0&0 \\ 
0&0&0 &0 &0\\ 
0&0&0 &0 &0\\ 
0&0&0 &0 &0
}.
\]

In the case we have $|\lambda_1|=|\lambda_2|>|\lambda_3|$, then we obtain 
$W=\{1,2\}, m=2, W_{max}=\{1,2\}$ and \[
J_W=J_{W_{max}}=\mat{\lambda_1 & 1 & 0&0&0 \\ 
0 & \lambda_1 & 0 & 0&0 \\ 
0&0&\lambda_2 &1 &0\\ 
0&0&0 &\lambda_2 &0\\ 
0&0&0 &0 &0
},
L_W=L_{W_{max}}=\mat{0 &1 & 0&0&0  \\
0 & 0 & 0 & 0&0 \\ 
0&0&0 &1 &0\\ 
0&0&0 &0 &0\\ 
0&0&0 &0 &0
}
\]
with 
\[
D_W=D_{W_{max}}=\mat{\lambda_1 &0 & 0&0&0 \\ 0 & \lambda_1 & 0 & 0&0 \\ 
0&0&\lambda_2 &0 &0\\ 
0&0&0 &\lambda_2 &0\\ 
0&0&0 &0 &0
}, \alpha D_W^{\#}=\alpha D_{W_{max}}^{\#}=\mat{\alpha/\lambda_1 &0 & 0&0&0 \\ 0 & \alpha/\lambda_1 & 0 & 0&0 \\ 
0&0&\alpha/\lambda_2 &0 &0\\ 
0&0&0 &\alpha/\lambda_2 &0\\ 
0&0&0 &0 &0
}.
\]

In the case we have $|\lambda_3|>\max\{|\lambda_1|,|\lambda_2|\}$, then we obtain 
$W=\{3\}, m=1, W_{max}=\{3\}$ and \[
J_W=J_{W_{max}}=\mat{0&0 & 0&0&0 \\ 
0 & 0 & 0 & 0&0 \\ 
0&0&0 &0 &0\\ 
0&0&0 &0 &0\\ 
0&0&0 &0 &\lambda_3
},
L_W=L_{W_{max}}=\mat{0 &0 & 0&0&0  \\
0 & 0 & 0 & 0&0 \\ 
0&0&0 &0 &0\\ 
0&0&0 &0 &0\\ 
0&0&0 &0 &0
}
\]
with 
\[
D_W=D_{W_{max}}=\mat{
0&0&0 &0 &0 \\ 
0&0&0 &0 &0 \\ 
0&0&0 &0 &0\\ 
0&0&0 &0 &0\\ 
0&0&0 &0 &1
}, \alpha D_W^{\#}=\alpha D_{W_{max}}^{\#}=\mat{0 &0 & 0&0&0 \\ 0 & 0 & 0 & 0&0 \\ 
0&0&0 &0 &0\\ 
0&0&0 &0 &0\\ 
0&0&0 &0 &\alpha/\lambda_3
}.
\]

Now we can state the main result of this section. In \textit{Case I}, when $\alpha<1$, we have convergence to a distribution. There is no scaling. In the original model of learning with non-varying averaging matrix and learning rates,  in the case, $\mathcal{E}>0$, the iterations have a contraction and thus this result is similar in spirit to the case of the time varying case with persistent noise and the fourth item of \autoref{thm:blwn}.  However we provide here an explicit description of the limiting distribution. 
In the theorem below we consider the other two cases of $\alpha>1$ and $\alpha=1$. It should be pointed out that \textit{Case III} covers a pure DeGroot model with no learning: $\mathcal{E}=0$. For \textit{Case II}, the interpretation could be that the players are putting too much weight on their feedback or learning ability or the $L^1$ norm of $A-\mathcal{E}$ is greater than 1. Thus, \autoref{thm:blwn} and \autoref{t:clt} complement each other by addressing different features of Averaging plus Learning. 


\begin{theorem}\label{t:clt}
Assume that $X_t$ satisfies
\[
X_{t+1}=(A-\mathcal{E})X_{t}+\gamma_t, \text{ for }t\ge0,
\]
where $(\gamma_t)_{t\ge0}$ is an iid sequence of random variables.   
Using the decomposition \eqref{e:Jordanform0}, we denote 
\[
\alpha =\max\{|\lambda_i|:i=1,2,\dots,l\}
\] where $l$ is the number of Jordan blocks and $\lambda_i$ is the eigenvalue of each Jordan block. 
Then,
\begin{enumerate}
	\item[Case I.] If $\alpha<1$ and the noise $\gamma_t$ is in $L^1$,  then
	\begin{equation}\label{e:clt1}
		X_t \Longrightarrow  \sum_{s\ge0}(A-\mathcal{E})^s\gamma_s.
	\end{equation}
	\item[Case II.] If $\alpha>1$ and the noise $\gamma_t$ is in $L^1$, then we can write 
	\begin{equation}\label{e:clt2}
		\frac{X_t}{t^{m-1}\alpha^t}= \frac{L_{max}^{m-1}Q_{max}^{t-m+1}}{\alpha^{m-1}(m-1)!}\left(X_0+\sum_{s\ge 0}\frac{(Q_{max}^{\#})^s}{\alpha^s}\gamma_{s}\right) + R_t
	\end{equation}
	where $R_t$ converges to $0$ in $L^1$.
	\item[Case III.] If $\alpha=1$, and the noise $\gamma_t$ is $L^2$, iid with mean $\mu$ and covariance matrix $\Gamma$, then
	\begin{equation}\label{e:clt3}
		\frac{X_t-\E[X_t]}{t^{m-1/2}} \Longrightarrow N(0,C)
	\end{equation}
	where the convergence is in weak/distribution sense and the covariance matrix $C$ is given by 
	\[
	C=\frac{1}{(2m-1)(m-1)!^2} K^{m-1}\Gamma (K^{m-1})^T \text{ with }K=P^{-1}L_{W_{max}}P.
	\]
	with the important convention that for $m=1$, $K^0=Id$ for any matrix $K$ (including the zero matrix).  
\end{enumerate}
\end{theorem}
\begin{proof} Before we start the proof we point out that the key to the analysis here is the Jordan decomposition.  We will use here the convention that the Jordan blocks are real valued, however we use the complex version for the sake of the exposition.  The real case can be worked out in a similar fashion with a little bit more care of the algebra.  For a clarification, we point out that the real Jordan decomposition can be realized from the complex decomposition.  
	
	Using the decomposition $A-\mathcal{E}=P^{-1}JP$ and denoting $\tilde{X}_t=PX_t$, then we get 
	\[
	\tilde{X}_{t+1}=J \tilde{X}_{t}+P\gamma_t.
	\]
	Because the matrix $J$ is a block diagonal matrix, we can reduce the analysis to each block. The general result then follows by transferring the results to $X_t=P^{-1}\tilde{X}_t$.    
	
	We will treat each case separately for each Jordan block.  In this case we fix an index $k$ and write $J=J_k$ as  
	\[
	J_k=\met{\lambda &1  &0 & \hdotsfor{3} \\ 0 & \lambda & 1 & 0 & \hdotsfor{2}\\ \hdotsfor{6}\\ \hdotsfor{6} \\ 0&\hdotsfor{2} & 0 &\lambda & 1 \\ 0&\hdotsfor{3} & 0 & \lambda}=\lambda Id+ L
	\]
	where $\lambda=\lambda_k$ and 
	\[
	L=\met{0 &1  &0 & \hdotsfor{3} \\ 0 & 0 & 1 & 0 & \hdotsfor{2}\\ \hdotsfor{6}\\ \hdotsfor{6} \\ 0&\hdotsfor{2} & 0 & 0 & 1 \\ 0&\hdotsfor{3} & 0 & 0}. 
	\]
	To simply the exposition here we assume that $m=m_k$ is the dimension of the Jordan block.    
	
	Notice the important fact that $L^m=0$. The key property which follows from this is that 
	\begin{equation}\label{e:clt4}
		J^i=\sum_{j=0}^{i\wedge (m-1)}\lambda^{i-j}{i \choose j}L^j,
	\end{equation}
	with the convention that $a\wedge b=\min\{a,b \}$.

	Now we consider the first case of our result, namely $\alpha<1$.  In this case, we denote by $Y_t$  the coordinates of $\tilde{X}_t$ corresponding to the Jordan block $J$ and let also $\delta_t$ the same corresponding part of $P\gamma_t$.  Then, an easy algebraic calculation gives (cf. \eqref{e:clt4}) that
	\[
	\begin{split}
		Y_{t}&=J^tY_0+\delta_{t-1}+J\delta_{t-2}+\dots+J^{t-1}\delta_0   \\
		& = J^{t}Y_0 +\sum_{s=0}^{t-1}J^{s}\delta_{t-1-s} .
	\end{split}
	\]
	Because $|\lambda|<1$, it is not difficult to observe (essentially from \eqref{e:clt4}) that $J^t$ converges to $0$ as $t$ converges to $\infty$.  In particular what this means is that $J^{t}Y_0$ converges to $0$ as $t$ converges to infinity.  On the other hand, the second term, namely $\sum_{s=0}^tJ^{s}\delta_{t-s} $ has the same distribution as $\sum_{s=0}^t J^{s}\delta_{s}$.  This sum is convergent in $L^1$.  Indeed if $\lambda=0$, then $J^t=0$ for $t\ge m$.  On the other case if $0<|\lambda|<1$ then, for example using \eqref{e:clt4}, we can assure that $\alpha^{-t}J^s$ is a bounded matrix, thus in particular we obtain that $|J^s|\le C\alpha^s$ for all $s\ge 1$, consequently it is now an elementary task to show that the series $\sum_{s=0}^\infty J^{s}\delta_{s}$ is convergent in $L^1$. Putting all the pieces together, we can easily see the conclusion. 
	
	For the second case, $\alpha>1$, we use again a reduction to blocks analysis.  For a given block, we write now as above
	\[
	\begin{split}
		Y_{t}&=J^tY_0+\delta_{t-1}+J\delta_{t-2}+\dots+J^{t-1}\delta_0   \\
		& = J^{t}Y_0 +\sum_{s=0}^{t-1}J^{t-1-s}\delta_{s}. 
	\end{split}
	\]
	As opposed to the previous case when $|\lambda|<1$ we now look at the 
	\begin{equation}\label{e:clt6}
		\frac{Y_{t}}{t^{m-1}\lambda^t} = \frac{J^{t}Y_0}{t^{m-1}\lambda^t} +\sum_{s=0}^{t-1}\frac{J^{t-1-s}\delta_{s}}{t^{m-1}\lambda^t}.
	\end{equation}
	From the above expression, there are two terms we need to take into account.  Now, using \eqref{e:clt4} for a given $s=0,1,2,\dots$ we analyze the asymptotic of  
	\[
	\frac{J^{t-s}}{t^{m-1}\lambda^t}=\sum_{j=0}^{(t-s)\wedge (m-1)}\frac{\lambda^{t-s-j}{t-s \choose j}}{t^{m-1}\lambda^t}L^j.
	\]
	The point is that we have a finite number of terms in the above sum and we can take the limit as $t\to\infty$ for each individual term.  For instance we have 
	\[
	\frac{\lambda^{t-s-j}{t-s \choose j}}{t^{m-1}\lambda^t}\xrightarrow[t\to\infty]{}
	\begin{cases}
		0 & \text{ if } j<m-1 \\ 
		\frac{1}{\lambda^{s+m-1}(m-1)!} & \text{ for } j=m-1.  
	\end{cases}
	\]
	From this we derive that for a fixed $0\le s \le t-1$, 
	\[
	\frac{J^{t-1-s}}{t^{m-1}\lambda^t}\xrightarrow[t\to\infty]{}
	\frac{1}{\lambda^{s+m}(m-1)!}L^{m-1}. 
	\]
	Now, we go back to \eqref{e:clt6} and notice that 
	\begin{equation}\label{e:clt7}
		\frac{J^{t}Y_0}{t^{m-1}\lambda^t}\xrightarrow[t\to\infty]{} \frac{L^{m-1}Y_0}{\lambda^{m-1}(m-1)!}
	\end{equation}
	Now we are going to split the series from \eqref{e:clt6} as 
	\begin{equation}\label{e:clt8}
		\sum_{s=0}^{t-1}\frac{J^{t-1-s}\delta_{s}}{t^{m-1}\lambda^t}=\sum_{s=0}^{u}\frac{J^{t-1-s}\delta_{s}}{t^{m-1}\lambda^t}+\sum_{s=u+1}^{t-1}\frac{J^{t-1-s}\delta_{s}}{t^{m-1}\lambda^t}
	\end{equation}
	where $u$ is a fixed but large number such that $0<u<t-1$.  For the first part of the series we have that for a fixed $u$,
	\begin{equation}\label{e:clt10}
		\lim_{t\to\infty}\sum_{s=0}^{u}\frac{J^{t-1-s}\delta_{s}}{t^{m-1}\lambda^t}= \frac{L^{m-1}}{\lambda^{m}(m-1)!}\sum_{s=0}^{u}\frac{\delta_s}{\lambda^s}. 
	\end{equation}
	The second term can be controlled as follows. Take any matrix norm and estimate 
	\[
	\E[\|\sum_{s=u+1}^{t-1}\frac{J^{t-1-s}\delta_{s}}{t^{m-1}\lambda^t} \|]\le \sum_{s=u+1}^{t-1}\frac{[\|J^{t-1-s}\| \, \E|\delta_{s}|]}{t^{m-1}\lambda^t} \le c\E[|\delta_0|]\sum_{s=u+1}^{t-1}\frac{1}{|\lambda|^s}\le \frac{C}{|\lambda|^{u+1}}
	\]
	where $c,C>0$ are some constants independent of $u$ and $t$.  Using now \eqref{e:clt8}, \eqref{e:clt10} and \eqref{e:clt10} we can conclude that 
	\[
	\limsup_{t\to\infty}E[|\sum_{s=0}^{t-1}\frac{J^{t-1-s}\delta_{s}}{t^{m-1}\lambda^t}-\frac{L^{m-1}}{\lambda^{m}(m-1)!}\sum_{s=0}^{u}\frac{\delta_s}{\lambda^s}|]\le \frac{C}{|\lambda|^{u+1}}.
	\]
	Now as the series $\sum_{s=0}^{\infty}| \frac{\delta_s}{\lambda^s}|$ is convergent in $L^1$ for $|\lambda|>1$ which means that we can let $u$ tend to infinity and get the conclusion that 
	\[
	\frac{Y_{t}}{t^{m-1}\lambda^t} = \frac{L^{m-1}}{\lambda^{m-1}(m-1)!}
	\left( Y_0+\sum_{s=0}^{\infty}\frac{\delta_{s}}{\lambda^s}\right)+R_t.
	\]
	where the remainder $R_t$ is a random variable such that $E[|R_t|]\xrightarrow[t\to\infty]{}0$. 
	
	For each eigenvalue $\lambda$ with $|\lambda|=\alpha$, we can write it as $\lambda=\alpha e^{i\theta}$ for some $\theta\in[0,2\pi)$ and with this representation we now have 
	\[
	\frac{Y_{t}}{t^{m-1}\alpha^t} = \frac{e^{i(t-m+1)\theta}L^{m-1}}{\alpha^{m-1}(m-1)!}
	\left( Y_0+\sum_{s=0}^{\infty}\frac{e^{-is\theta}\delta_{s}}{\alpha^s}\right)+R_t.
	\]
	Putting all the contributing blocks together, we get second part of the Theorem.

	For the last part, namely Case III of the Theorem, for $\alpha=1$ we can simply use a multidimensional version of the CLT.  For a single Jordan block we have 
	\[
	Y_t-\E[Y_t] = \delta_{t-1}-E[\delta_{t-1}]+J(\delta_{t-2}-\E[\delta_{t-2}])+\dots+J^{t-1}(\delta_0 -\E[\delta_0]).
	\]
	which in distribution is the same as 
	\[
	Y_t-\E[Y_t]\sim \delta_{0}-E[\delta_{0}]+J(\delta_{1}-\E[\delta_{1}])+\dots+J^{t-1}(\delta_{t-1} -\E[\delta_{t-1}])
	\]
	Using \cite[Theorem 2.3.8]{Stroock} we need first to compute the covariance matrix 
	\[
	\Lambda_t=\sum_{s=0}^{t-1} Cov(J^{t-1-s}\delta_s)=\sum_{s=0}^{t-1}J^{t-1-s}\Gamma (\overline{J^{t-1-s}})^T,
	\]
	where $\Gamma$ is the covariance matrix of $\delta_s$ and we use the bar here to denote the complex conjugate.  Next, we write 
	\[
	J^{k}=\sum_{j=0}^{k\wedge (m-1)}\lambda^{k-j}{k \choose j}L^j=\lambda^k\sum_{j=0}^{m-1} P_j(k)
	\]
	where $P_j$ is a matrix valued polynomial of degree $j$.  The coefficient of $P_{m-1}$ is $\frac{1}{(m-1)!}L^{m-1}$ and in general we have 
	\[
	|J_t|\le C t^{m-1} 
	\]
	for some constant $C>0$ which does not depend on $t$.  
	In particular we have that 
	\[
	\Lambda_t=\sum_{j,l=0}^{m-1}\sum_{s=0}^{t-1}P_j(s)\Gamma(\overline{P_l(s)})^T.
	\]
	The leading term in $t$ of the above expression is given by the polynomials of the largest degrees, thus 
	\[
	\Lambda_t=\sum_{s=0}^{t-1}\frac{s^{2m-2}}{(m-1)!^2}L^{m-1}\Gamma (L^{m-1})^T +O(\sum_{s=0}^{t-1}s^{2m-4}) = \frac{t^{2m-1}}{(2m-1)(m-1)!^2} L^{m-1}\Gamma (L^{m-1})^T +O(t^{2m-3}).
	\]
	To use the CLT we need to check the Lindeberg condition, namely \cite[Equation (2.3.10)]{Stroock}.  This is now easily done by observing that 
	\[
	\E[|J^{s}\delta_s|^2,|J^{s}\delta_s|>\epsilon \Lambda_t]\le  Cs^{m-1}\E[|\delta_s|^2,|\delta_s|>\epsilon \frac{\Lambda_t}{Cs^{m-1}}] \le C\frac{s^{2m-2}}{\epsilon^2\Lambda_t^2}
	\]
	for a constant $C>0$ independent of $s$.  Summing this over $s$, we see it is easy to verify now the Lindeberg condition. From this, we get the conclusion by putting together all the blocks and noticing that the eigenvalues $\lambda$ with $|\lambda|<1$ do not contribute anything if we scale $X_t$ by $t^{m-1/2}$.   
	
\end{proof}
Before we jump into the proof (in the Appendix), let us make some comments on the significance of this Theorem.  

\subsection{Significance and interpretation of Theorem~\ref{t:clt}} \label{thm:importance}

We highlight some important points of the result above which might shed some light on its relevance and importance: 

\begin{enumerate}
\item The first part of the Theorem is pretty straightforward and it should not come as a surprise given that we treated something like this in Theorem~\ref{thm:blwn}.  However this is slightly stronger than the previous result because we have a complete description of the limiting distribution, not mere existence.   
\item  The second item is very interesting.  It shows that for the case of large eigenvalues, the leading order is $t^{m-1}\alpha^t$.  On the other hand this is not convergent if we have complex eigenvalues, because the term $Q_{max}^{t-m+1}$ oscillates.  For example, in the case that all the eigenvalues are simple, this term is of the type $a\cos(t\theta)+b\sin(t\theta)$.  It is also interesting to point out that for each such complex eigenvalue we obtain an oscillatory term.  
\item The last item shows that we get a CLT, however this is very sensitive to the change of the matrix $A$. For the CLT, the contribution of the noise comes only through generic properties of $\gamma$, like the mean and covariance, however, in the first two items the whole noise is present in the asymptotic behavior. 

\item  If we keep in mind that in fact $X_t$ in this Theorem should be thought as $X_t-\bar{\sigma}$, then it becomes obvious that in general, knowing the matrix $A-\mathcal{E}$, we should be able to get statistical estimates for $\sigma$. However, as the first two items show, the noise is a contributing part of the asymptotic behavior. 
\item In some cases, it might happen that the asymptotic limits in \eqref{e:clt1} \eqref{e:clt2} and \eqref{e:clt3} might be 0.  In this case, what we can do is go back to the Jordan block and refine the estimates.    
\item We can see the fragility of such types of results. A slight change in the matrix $A$ could lead to radically different behaviors for the dynamical system.  The important lesson is that for understanding the limiting behavior, the distribution of the noise is an integrated part, except the rather rare case when we can see a CLT for which only the covariance matrix of the noise contributes to the limiting distribution.  
\item The CLT incorporates a richer structure than possible in just standard DeGroot learning. Because of the feedback term, Case III encapsulates a basic DeGroot model. With  $\alpha=1$, it is possible all the learning rates are zero, $\mathcal{E}=0$ and $A-\mathcal{E}$ in \ref{e:clt0} becomes just $A$, then we have a pure \textbf{noisy} DeGroot model with no learning:
$
X_{t+1}=AX_{t}+\gamma_t, \text{ for }t\ge0.
$
Alternatively, maybe some agents are \textit{not} learning but interacting only. In that case, $\mathcal{E}$ is of a lower rank and $\mathcal{E}\neq 0$. In both situations we allow for negative weights in $A$ and in $\mathcal{E}$, as long as Case III applies and the original weights matrix $A-\mathcal{E}$ was stochastic. 
\end{enumerate}

\subsection{Simulation to Convergence and Numerical Discussion}\label{s:cltsim}

Figures \ref{f:alpha1andbigCLT1} examines asymptotic case for five agents with binomial $\pm 1$ noise or uniform noise on $[-1,1]$ and the dynamical matrix given by \ref{e:JordanBlock1}.  We consider the case of noise with independent components on the left hand column of Figure~\ref{f:alpha1andbigCLT1} and the case the noise is rotated by an orthogonal matrix in the right hand column. The simulations contain several examples, depending on the values of  $\alpha$ and we plotted the pairwise distributions of the agents.  

If $\alpha<1$ with binomial noise we get some Cantor like distributions.  For other types of noise, this is not the case of course.  Notice that the Cantor type structure appears also in the case of noise with correlated components of the noise.  It is also interesting to point out that the limiting distribution heavily depends on the structure of the noise.  

In the case $\alpha>1$, we have collapses of the limiting distribution.  The dimension of the collapse is defined by the dimensions whose corresponding eigenvalues are $\alpha$.  In figure (c) and (d), we started with a random initial value of $X_0$ with components uniform and independent in $[0,10]$.  Notice that the limiting distribution is in fact depending not only on the noise but also on the in the initial condition, thus making this limiting case more intricate.  On the other hand, the dimension of the collapse (as it is described in \eqref{e:clt2}) is dictated by the subspace generated by the components with eigenvalues corresponding to $\alpha$.  Namely in Figure (c), the collapse is two dimensional while in Figure (d) is one dimension.  This is reinforced by the scale at which the other cases live on.  

The last case, which is the CLT, is $\alpha=1$ and it shows the limiting normal also lives on a lower dimensional space which is again defined by the eigenvalues of absolute values $1$.  This is clarified also by the fact that the covariance matrix is defined in this fashion.  For example, in the case of Figure (e) and (f), the covariance matrices of the limiting distribution are given 
\begin{equation}
C_{(e)}=\met{2/9&0&0&0&0 \\ 0&0&0&0&0 \\ 0&0&0&0&0 \\0&0&0&0&0 \\0&0&0&0&0} \text{ and }C_{(f)}=\met{1&0&0&0&0 \\ 0&0&0&0&0 \\ 0&0&1&0&0 \\0&0&0&0&0 \\0&0&0&0&0}. 
\end{equation}

Notice also here that in figure (e) $\lambda_1=1$ and $\lambda_3=1$ but the last component also disappears which at first site looks strange. This is in fact due to  the scaling in the CLT which is much bigger than the standard CLT scaling. Ultimately, this is the reflection of the fact that the first Jordan block is bigger than the last Jordan block.  

Thus, what these indicate is that the second component also vanishes in the limit, a fact which is somehow surprising but supported by the simulations as well.  One last thing is that the rotation matrix applied to the noise does not have any effect on the limiting distribution for the CLT, as opposed to the other two cases because in the end, the contribution of the noise is via the covariance matrix of the noise not through the whole distribution of the noise.

When $\alpha >1$, as in some agents are positively correlated, whereas others are anti-correlated. In this case, there is no dimensional collapse. However, when we have a pure DeGroot model, \autoref{f:alpha1andbigCLT1} (b), all five agents synchronize and reach stochastic consensus. In both of the \autoref{f:alpha1andbigCLT1} (a) and (b), the noise is not rotated. Furthermore, in $A$ the size of the largest Jordan block is 1. 

\begin{figure}[ht!]%
\centering
\subfloat[\centering non-CLT: $\alpha <1$]{{\includegraphics[width=6.15cm]{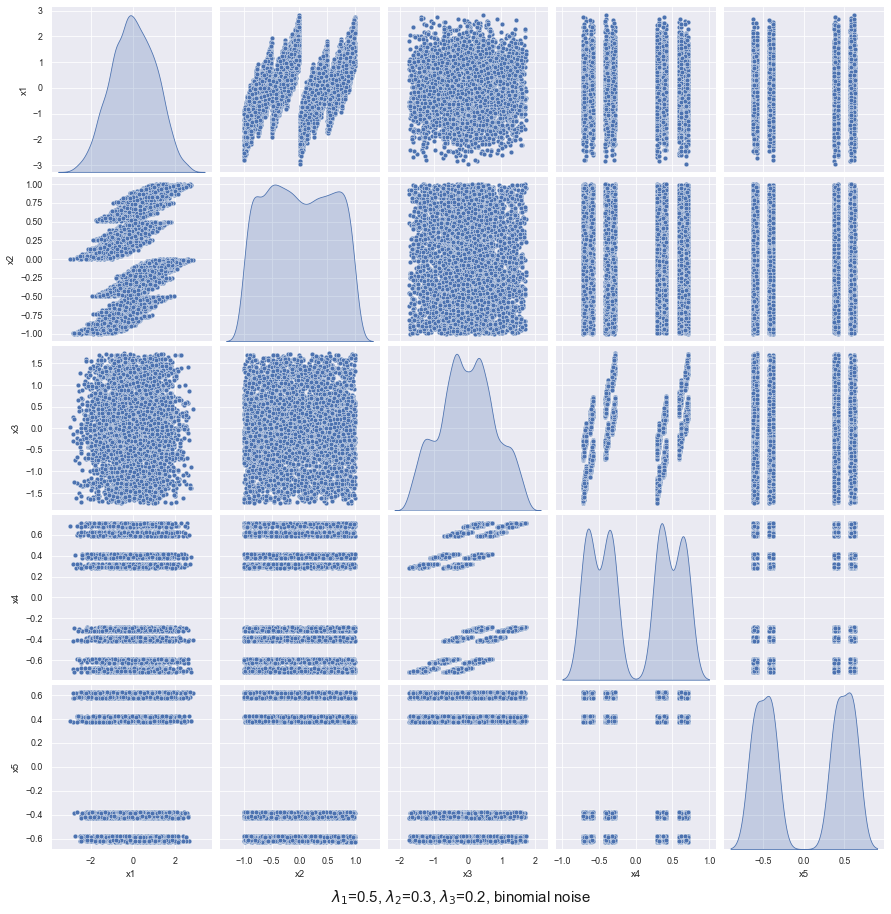} }}%
\qquad
\subfloat[\centering non-CLT: $\alpha <1$]{{\includegraphics[width=6.15cm]{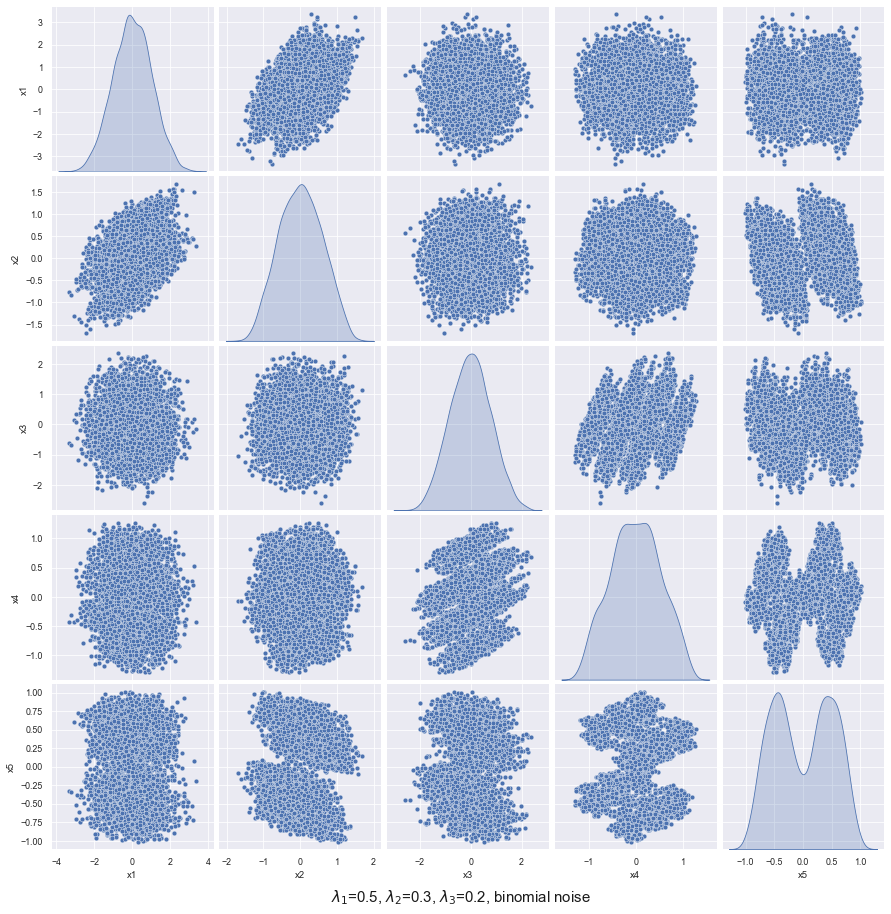} }}%
\qquad

\subfloat[\centering non-CLT: $\alpha >1$]{{\includegraphics[width=6.15cm]{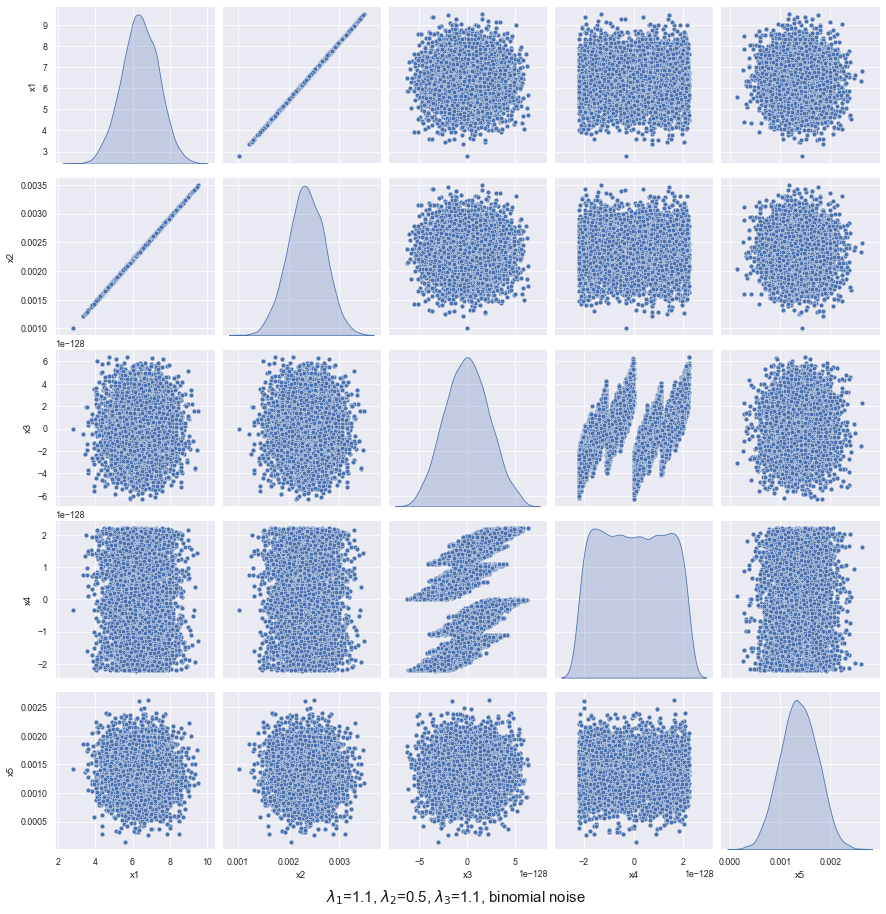} }}%
\qquad
\subfloat[\centering non-CLT: $\alpha > 1$]{{\includegraphics[width=6.15cm]{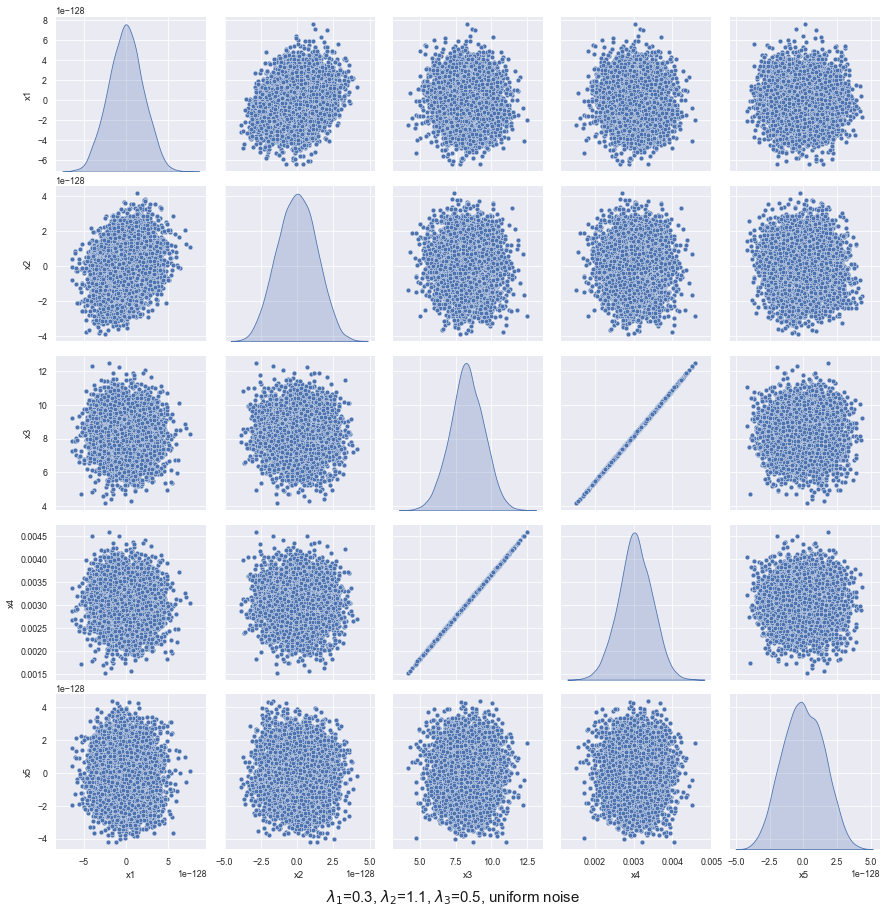} }}%
\qquad
\subfloat[\centering CLT: $\alpha =1$]{{\includegraphics[width=6.15cm]{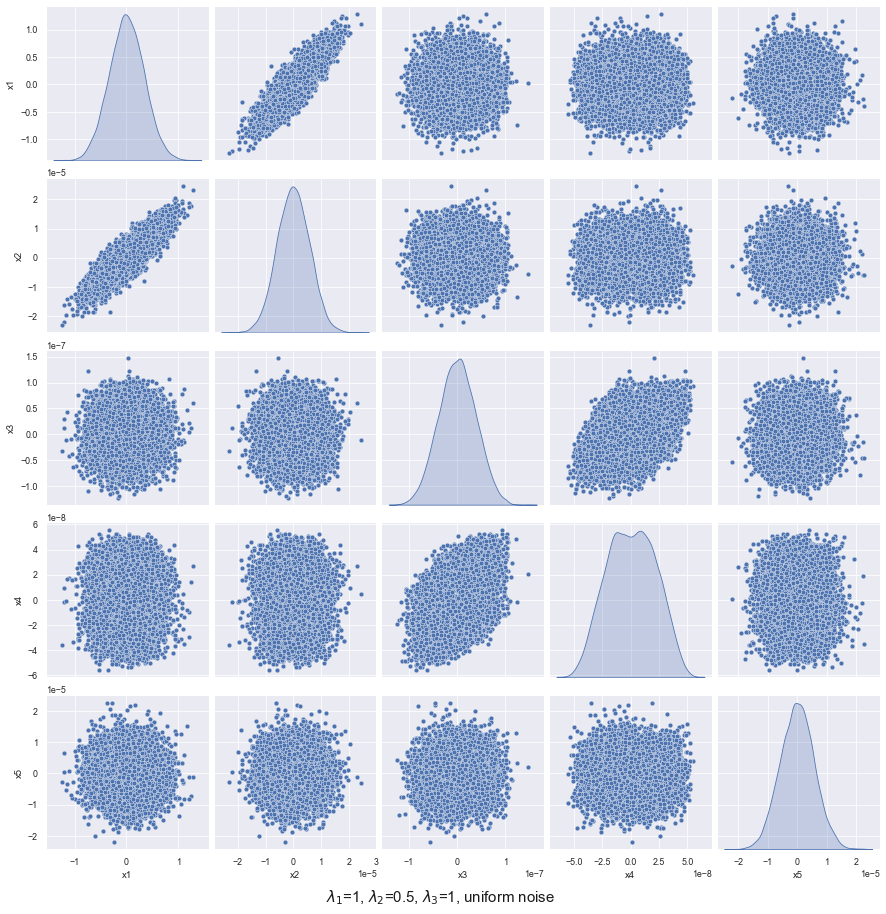} }}%
\qquad
\subfloat[\centering CLT: $\alpha =1$]{{\includegraphics[width=6.15cm]{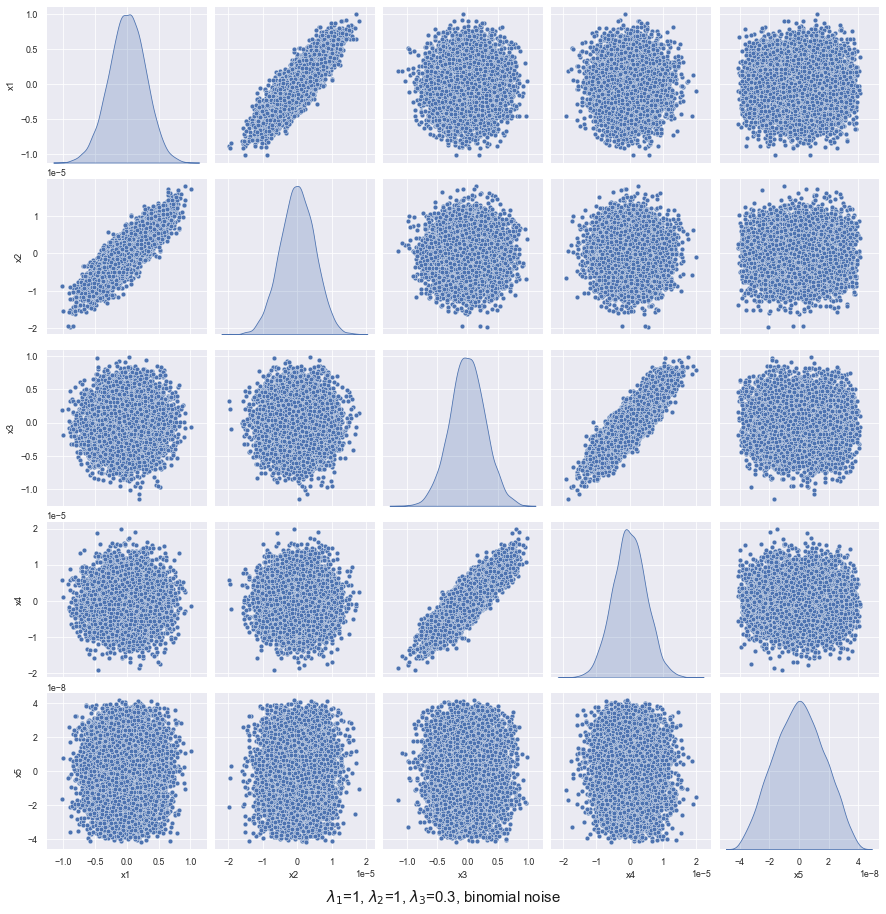} }}%
\caption{Figures (a) and (b) represent models with $\alpha<1$ which is not a CLT.   The second row, figures (c) and (d) represent the case of $\alpha>1$.  Notice the dimensional collapse in the components with eigenvalues $<1$.  In picture (c) the support of the distribution is $2$-dimensional while in figure (d) the support is $1$ dimensional.  In figures (e) and (f) we see a CLT limiting distribution.  Notice also the limiting distribution in (d) lives on a line while in (f) is two dimensional, the surviving components being the first and the third.   
}\label{f:alpha1andbigCLT1}
\end{figure}
To examine a more refined Jordan block structure, we directly use \ref{e:JordanBlock1} as our $A$ and vary $\lambda_1,\lambda_2$ and $\lambda_3$. The noise though iid, can be separated in two regimes: orthogonal or identity rotations. 
If the noise is rotated, the dynamics changes from adding just a pure $\gamma_t$ term to adding term $U\, \gamma_t$, where $U$ is an orthogonal matrix. When $U=I$, there is no rotation as in \autoref{f:alpha1andbigCLT1}. The rotational cases worth examining are the fractal or cantor-like distributions that arise in the $\alpha < 1$ case.

\section{Conclusion}


In this paper we introduce a model of interaction and learning which resembles to some extent the DeGroot model, though it is rather different in scope.  The plain vanilla model is deterministic but we introduce a model with randomness and analyze convergence when the noise decays to zero in large time.   

When the noise is not decaying, condition \eqref{e:ll1b} is crucial to ensure convergence in distribution. This condition can be thought of as a stabilization feature of learning. Individuals learn with varying $A_t$ and $\mathcal{E}_t$ but these cannot change too drastically. Eventually, all agents settle down.  
Temporary bubbles are also possible, where agents don't learn $\rho_t=0$ for shout bouts or there are periods of insanity $\rho_t >1$: this is an added feature of condition \eqref{e:ll1}. 

The limit theorem developed in Theorem \ref{t:clt} shows an intriguing phenomenon for the case of time independent matrix. Essentially, if we want to see the more refined structure of the $X_t$ \textemdash the opinions of the agents at time $t$\textemdash then the point is that the asymptotic behavior depends on the Jordan decomposition.  In some instances, we can get a CLT. However, thinking in terms of the matrix $A$ of the dynamics, this is rather unlikely.  On the other hand, if the eigenvalues stay inside the unit disk or are outside the unit disk, the main asymptotic limit depends, in fact, on the whole distribution of the noise. The more refined version of the results in Theorem \ref{t:clt} for the case of time varying dynamics matrix $A_t$ is desired, but given the fragility of the time independent case, a unitary approach seems more intricate.  Of value would be a treatment in which the matrix $A_t$ is picked out at random with some distribution.  Some of these topics appear in vastly different frameworks to ours \cite{diaconis1999iterated} and \cite{bhattacharya2003random}.   

Thus far, agents' rules are mechanical. Future work should address the issue of rationality or an infinite number of agents. In Averaging plus Learning, individuals are boundedly rational. They use the same rule. What if the agents are strategic? In the presence of noise or disturbance, manipulation of opinion dynamics by forceful agents like an AI bot (\cite{acemoglu2010spread}) becomes an interesting but difficult question.  Random dynamical systems were reviewed by \cite{bhattacharya2003random,stachurski2009economic}. Our results use different techniques to study social learning. Though it must be acknowledged that recursive random dynamical systems are not new in economics, physics and computer science, their probabilistic analysis poses several challenges to researchers.


\bibliographystyle{alpha}
\bibliography{ReferencesAPL17}

\newcommand{\etalchar}[1]{$^{#1}$}
\begin{thebibliography}{AAAGP21}

\bibitem[AAAGP21]{amir2021robust}
Gideon Amir, Itai Arieli, Galit Ashkenazi-Golan, and Ron Peretz.
\newblock Robust naive learning in social networks.
\newblock {\em arXiv preprint arXiv:2102.11768}, 2021.

\bibitem[ABMF21]{SequentialNaiveLearningArieli}
Itai Arieli, Yakov Babichenko, and Manuel Mueller-Frank.
\newblock Sequential naive learning.
\newblock In {\em Proceedings of the 22nd ACM Conference on Economics and
  Computation}, EC '21, page~97, New York, NY, USA, 2021. Association for
  Computing Machinery.

\bibitem[ACFO13]{acemouglu2013opinion}
Daron Acemo{\u{g}}lu, Giacomo Como, Fabio Fagnani, and Asuman Ozdaglar.
\newblock Opinion fluctuations and disagreement in social networks.
\newblock {\em Mathematics of Operations Research}, 38(1):1--27, 2013.

\bibitem[AO11]{acemoglu2011opinion}
Daron Acemoglu and Asuman Ozdaglar.
\newblock Opinion dynamics and learning in social networks.
\newblock {\em Dynamic Games and Applications}, 1(1):3--49, 2011.

\bibitem[AOP10]{acemoglu2010spread}
Daron Acemoglu, Asuman Ozdaglar, and Ali ParandehGheibi.
\newblock Spread of (mis) information in social networks.
\newblock {\em Games and Economic Behavior}, 70(2):194--227, 2010.

\bibitem[Ban92]{banerjee1992simple}
Abhijit~V Banerjee.
\newblock A simple model of herd behavior.
\newblock {\em The quarterly journal of economics}, 107(3):797--817, 1992.

\bibitem[BBC17]{becker2017network}
Joshua Becker, Devon Brackbill, and Damon Centola.
\newblock Network dynamics of social influence in the wisdom of crowds.
\newblock {\em Proceedings of the national academy of sciences},
  114(26):E5070--E5076, 2017.

\bibitem[BBCM21]{banerjee2019naive}
Abhijit Banerjee, Emily Breza, Arun~G Chandrasekhar, and Markus Mobius.
\newblock Naive learning with uninformed agents.
\newblock {\em American Economic Review}, 111(11):3540--74, 2021.

\bibitem[BDG10]{buraczewski2010convergence}
Dariusz Buraczewski, Ewa Damek, and Yves Guivarc’h.
\newblock Convergence to stable laws for a class of multidimensional stochastic
  recursions.
\newblock {\em Probability theory and related fields}, 148(3):333--402, 2010.

\bibitem[BDM{\etalchar{+}}16]{buraczewski2016stochastic}
Dariusz Buraczewski, Ewa Damek, Thomas Mikosch, et~al.
\newblock Stochastic models with power-law tails.
\newblock {\em The equation X= AX+ B. Cham: Springer}, 2016.

\bibitem[BHOT05]{blondel2005convergence}
Vincent~D Blondel, Julien~M Hendrickx, Alex Olshevsky, and John~N Tsitsiklis.
\newblock Convergence in multiagent coordination, consensus, and flocking.
\newblock In {\em Proceedings of the 44th IEEE Conference on Decision and
  Control}, pages 2996--3000. IEEE, 2005.

\bibitem[BM03]{bhattacharya2003random}
Rabi Bhattacharya and Mukul Majumdar.
\newblock Random dynamical systems: a review.
\newblock {\em Economic Theory}, 23(1):13--38, 2003.

\bibitem[Br{\'e}13]{bremaud2013markov}
Pierre Br{\'e}maud.
\newblock {\em Markov chains: Gibbs fields, Monte Carlo simulation, and
  queues}, volume~31.
\newblock Springer Science \& Business Media, 2013.

\bibitem[Buc15]{buchanan2015physics}
Mark Buchanan.
\newblock Physics in finance: Trading at the speed of light.
\newblock {\em Nature News}, 518(7538):161, 2015.

\bibitem[Bul22]{FB-LNS}
F.~Bullo.
\newblock {\em Lectures on Network Systems}.
\newblock Kindle Direct Publishing, {1.6} edition, 2022.

\bibitem[CF22]{carletti2022theory}
Timoteo Carletti and Duccio Fanelli.
\newblock Theory of synchronisation and pattern formation on time varying
  networks.
\newblock {\em Chaos, Solitons \& Fractals}, 159:112180, 2022.

\bibitem[CLX20]{chandrasekhar2020testing}
Arun~G Chandrasekhar, Horacio Larreguy, and Juan~Pablo Xandri.
\newblock Testing models of social learning on networks: Evidence from two
  experiments.
\newblock {\em Econometrica}, 88(1):1--32, 2020.

\bibitem[CS77]{chatterjee1977towards}
Samprit Chatterjee and Eugene Seneta.
\newblock Towards consensus: Some convergence theorems on repeated averaging.
\newblock {\em Journal of Applied Probability}, 14(1):89--97, 1977.

\bibitem[DF99]{diaconis1999iterated}
Persi Diaconis and David Freedman.
\newblock Iterated random functions.
\newblock {\em SIAM review}, 41(1):45--76, 1999.

\bibitem[DMPW09]{during2009boltzmann}
Bertram D{\"u}ring, Peter Markowich, Jan-Frederik Pietschmann, and
  Marie-Therese Wolfram.
\newblock Boltzmann and fokker--planck equations modelling opinion formation in
  the presence of strong leaders.
\newblock {\em Proceedings of the Royal Society A: Mathematical, Physical and
  Engineering Sciences}, 465(2112):3687--3708, 2009.

\bibitem[DSGLP04]{de2004multidimensional}
Beno{\^\i}te De~Saporta, Yves Guivarc'h, and Emile Le~Page.
\newblock On the multidimensional stochastic equation yn+ 1= anyn+ bn.
\newblock {\em Comptes Rendus Mathematique}, 339(7):499--502, 2004.

\bibitem[DVZ03]{demarzo2003persuasion}
Peter~M DeMarzo, Dimitri Vayanos, and Jeffrey Zwiebel.
\newblock Persuasion bias, social influence, and unidimensional opinions.
\newblock {\em The Quarterly journal of economics}, 118(3):909--968, 2003.

\bibitem[DW15]{during2015opinion}
Bertram D{\"u}ring and Marie-Therese Wolfram.
\newblock Opinion dynamics: Inhomogeneous boltzmann-type equations modelling
  opinion leadership and political segregation.
\newblock {\em Proceedings of the Royal Society A: Mathematical, Physical and
  Engineering Sciences}, 471(2182):20150345, 2015.

\bibitem[FDLL98]{fudenberg1998theory}
Drew Fudenberg, Fudenberg Drew, David~K Levine, and David~K Levine.
\newblock {\em The theory of learning in games}, volume~2.
\newblock MIT press, 1998.

\bibitem[FF18]{fagnani2018introduction}
Fabio Fagnani and Paolo Frasca.
\newblock {\em Introduction to averaging dynamics over networks}.
\newblock Springer, 2018.

\bibitem[GFR{\etalchar{+}}22]{ghosh2022synchronized}
Dibakar Ghosh, Mattia Frasca, Alessandro Rizzo, Soumen Majhi, Sarbendu Rakshit,
  Karin Alfaro-Bittner, and Stefano Boccaletti.
\newblock The synchronized dynamics of time-varying networks.
\newblock {\em Physics Reports}, 949:1--63, 2022.

\bibitem[GJ10]{golub2010naive}
Benjamin Golub and Matthew~O Jackson.
\newblock Naive learning in social networks and the wisdom of crowds.
\newblock {\em American Economic Journal: Microeconomics}, 2(1):112--49, 2010.

\bibitem[GS14]{ghaderi2014opinion}
Javad Ghaderi and Rayadurgam Srikant.
\newblock Opinion dynamics in social networks with stubborn agents: Equilibrium
  and convergence rate.
\newblock {\em Automatica}, 50(12):3209--3215, 2014.

\bibitem[HJ12]{horn2012matrix}
Roger~A Horn and Charles~R Johnson.
\newblock {\em Matrix analysis}.
\newblock Cambridge university press, 2012.

\bibitem[HJMR19]{hkazla2019reasoning}
Jan Hazla, Ali Jadbabaie, Elchanan Mossel, and M~Amin Rahimian.
\newblock Reasoning in bayesian opinion exchange networks is pspace-hard.
\newblock In {\em Conference on Learning Theory}, pages 1614--1648. PMLR, 2019.

\bibitem[HK{\etalchar{+}}02]{hegselmann2002opinion}
Rainer Hegselmann, Ulrich Krause, et~al.
\newblock Opinion dynamics and bounded confidence models, analysis, and
  simulation.
\newblock {\em Journal of artificial societies and social simulation}, 5(3),
  2002.

\bibitem[KB08]{kozma2008consensus}
Balazs Kozma and Alain Barrat.
\newblock Consensus formation on adaptive networks.
\newblock {\em Physical Review E}, 77(1):016102, 2008.

\bibitem[Kes73]{kesten1973random}
Harry Kesten.
\newblock Random difference equations and renewal theory for products of random
  matrices.
\newblock {\em Acta Mathematica}, 131:207--248, 1973.

\bibitem[Kir02]{kirman2002reflections}
Alan Kirman.
\newblock Reflections on interaction and markets.
\newblock {\em Quantitative Finance}, 2:322--326, 2002.

\bibitem[LAJ08]{lu2008synchronization}
Wenlian Lu, Fatihcan~M Atay, and J{\"u}rgen Jost.
\newblock Synchronization of discrete-time dynamical networks with time-varying
  couplings.
\newblock {\em SIAM Journal on Mathematical Analysis}, 39(4):1231--1259, 2008.

\bibitem[LeB01]{lebaron2001builder}
Blake LeBaron.
\newblock A builder's guide to agent-based financial markets.
\newblock {\em Quantitative finance}, 1(2):254, 2001.

\bibitem[MB21]{monin2021information}
Phillip~J Monin and Richard Bookstaber.
\newblock Information flows and crashes in dynamic social networks.
\newblock {\em Journal of Economic Interaction and Coordination},
  16(3):471--495, 2021.

\bibitem[MF13]{mueller2013general}
Manuel Mueller-Frank.
\newblock A general framework for rational learning in social networks.
\newblock {\em Theoretical Economics}, 8(1):1--40, 2013.

\bibitem[MPL17]{masuda2017random}
Naoki Masuda, Mason~A Porter, and Renaud Lambiotte.
\newblock Random walks and diffusion on networks.
\newblock {\em Physics reports}, 716:1--58, 2017.

\bibitem[MPPD19]{mcquade2019social}
Sean McQuade, Benedetto Piccoli, and Nastassia Pouradier~Duteil.
\newblock Social dynamics models with time-varying influence.
\newblock {\em Mathematical Models and Methods in Applied Sciences},
  29(04):681--716, 2019.

\bibitem[MPV17]{mai2017opinion}
Tung Mai, Ioannis Panageas, and Vijay~V Vazirani.
\newblock Opinion dynamics in networks: Convergence, stability and lack of
  explosion.
\newblock In {\em 44th International Colloquium on Automata, Languages, and
  Programming (ICALP 2017)}. Schloss Dagstuhl-Leibniz-Zentrum fuer Informatik,
  2017.

\bibitem[MST14]{mossel2014asymptotic}
Elchanan Mossel, Allan Sly, and Omer Tamuz.
\newblock Asymptotic learning on bayesian social networks.
\newblock {\em Probability Theory and Related Fields}, 158(1-2):127--157, 2014.

\bibitem[MT17]{mossel2017opinion}
Elchanan Mossel and Omer Tamuz.
\newblock Opinion exchange dynamics.
\newblock {\em Probability Surveys}, 14:155--204, 2017.

\bibitem[MTSJ18]{molavi2018theory}
Pooya Molavi, Alireza Tahbaz-Salehi, and Ali Jadbabaie.
\newblock A theory of non-bayesian social learning.
\newblock {\em Econometrica}, 86(2):445--490, 2018.

\bibitem[PNGCS14]{piliourasaamas}
Georgios Piliouras, Carlos Nieto-Granda, Henrik~I. Christensen, and Jeff~S.
  Shamma.
\newblock Persistent patterns: Multi-agent learning beyond equilibrium and
  utility.
\newblock In {\em Proceedings of the 2014 International Conference on
  Autonomous Agents and Multi-agent Systems}, AAMAS '14, pages 181--188, 2014.

\bibitem[PP18]{papadimitriou2018game}
Christos Papadimitriou and Georgios Piliouras.
\newblock Game dynamics as the meaning of a game.
\newblock {\em SIGEcom Exchanges}, 2018.

\bibitem[PVFT22]{parasnis2022random}
Rohit Parasnis, Ashwin Verma, Massimo Franceschetti, and Behrouz Touri.
\newblock A random adaptation perspective on distributed averaging.
\newblock {\em IEEE Control Systems Letters}, 7:241--246, 2022.

\bibitem[SCH17]{su2017noise}
Wei Su, Ge~Chen, and Yiguang Hong.
\newblock Noise leads to quasi-consensus of hegselmann--krause opinion
  dynamics.
\newblock {\em Automatica}, 85:448--454, 2017.

\bibitem[Sha22]{shang2022resilient}
Yilun Shang.
\newblock Resilient tracking consensus over dynamic random graphs: A linear
  system approach.
\newblock {\em European Journal of Applied Mathematics}, pages 1--16, 2022.

\bibitem[SL21]{stern2021impact}
Samuel Stern and Giacomo Livan.
\newblock The impact of noise and topology on opinion dynamics in social
  networks.
\newblock {\em Royal Society open science}, 8(4):201943, 2021.

\bibitem[Sob00]{Sobel2000EconomistsMO}
Joel Sobel.
\newblock Economists' models of learning.
\newblock {\em J. Economic Theory}, 94:241--261, 2000.

\bibitem[Sob20]{sobkowicz2020whither}
Pawel Sobkowicz.
\newblock Whither now, opinion modelers?
\newblock {\em Frontiers in Physics}, 8:587009, 2020.

\bibitem[Sta09]{stachurski2009economic}
John Stachurski.
\newblock {\em Economic dynamics: theory and computation}.
\newblock MIT Press, 2009.

\bibitem[Str10]{Stroock}
Daniel~W Stroock.
\newblock {\em Probability theory: an analytic view}.
\newblock Cambridge university press, 2010.

\bibitem[SWC{\etalchar{+}}20]{su2020noise}
Wei Su, Xueqiao Wang, Ge~Chen, Yongguang Yu, and Tarik Hadzibeganovic.
\newblock Noise-based synchronization of bounded confidence opinion dynamics in
  heterogeneous time-varying communication networks.
\newblock {\em Information Sciences}, 528:219--230, 2020.

\bibitem[Tos06]{toscani2006kinetic}
Giuseppe Toscani.
\newblock Kinetic models of opinion formation.
\newblock {\em Communications in mathematical sciences}, 4(3):481--496, 2006.

\bibitem[VMP18]{vaidya2018learning}
Tushar Vaidya, Carlos Murguia, and Georgios Piliouras.
\newblock Learning agents in financial markets: Consensus dynamics on
  volatility.
\newblock {\em Proceedings of the 17th International Conference on Autonomous
  Agents and MultiAgent Systems}, pages 2106--2108, 2018.

\bibitem[VMP20]{vaidya2020learning}
Tushar Vaidya, Carlos Murguia, and Georgios Piliouras.
\newblock Learning agents in black--scholes financial markets.
\newblock {\em Royal Society open science}, 7(10):201188, 2020.

\bibitem[WLC{\etalchar{+}}17]{wang2017noisy}
Chu Wang, Qianxiao Li, Bernard Chazelle, et~al.
\newblock Noisy hegselmann-krause systems: Phase transition and the
  2r-conjecture.
\newblock {\em Journal of Statistical Physics}, 166(5):1209--1225, 2017.

\bibitem[WTM19]{weber2019deterministic}
Dylan Weber, Ryan Theisen, and Sebastien Motsch.
\newblock Deterministic versus stochastic consensus dynamics on graphs.
\newblock {\em Journal of Statistical Physics}, 176(1):40--68, 2019.

\end{thebibliography}

\end{document}